\pgfplotsset{compat=1.8}
\tikzset{
  queueps/.style={
    rectangle split,
    rectangle split parts=4,
    draw,
    rectangle split horizontal=false,
    text height=0cm,
    text depth=0cm,
    inner ysep=0.0245cm,
    inner xsep=.5cm,
    rounded corners,
  }
}
\newtheorem{theorem}{Theorem}
\newtheorem{lemma}{Lemma}
\newtheorem{coro}{Corollary}
\theoremstyle{definition}
\newtheorem{definition}{Definition}
\newtheorem{ass}{Assumption}
\theoremstyle{remark}
\newtheorem{example}{Example}
\newtheorem*{num}{Numerical application}
\begin{document}

\title{Poly-Symmetry in Processor-Sharing Systems\thanks{The final publication is available at Springer via \url{http://dx.doi.org/10.1007/s11134-017-9525-2}.}}

\author[1]{Thomas Bonald}
\author[1]{C\'eline Comte}
\author[2]{Virag Shah\thanks{T. Bonald, C. Comte and V. Shah are members of LINCS, see \url{http://www.lincs.fr}.}}
\author[3]{Gustavo {de Veciana}}

\affil[1]{T\'el\'ecom ParisTech, Universit\'e Paris-Saclay, France}
\affil[2]{Microsoft Research - Inria Joint Centre, France}
\affil[3]{Department of ECE, The University of Texas at Austing, USA}

\date{\today}

\maketitle

\begin{abstract}
  We consider a system of processor-sharing queues with state-dependent service rates. These are allocated according to balanced fairness within a polymatroid capacity set. Balanced fairness is known to be both insensitive and Pareto-efficient in such systems, which ensures that the performance metrics, when computable, will provide robust insights into the real performance of the system considered. We first show that these performance metrics can be evaluated with a complexity that is polynomial in the system size if the system is partitioned into a finite number of parts, so that queues are exchangeable within each part and asymmetric across different parts. This in turn allows us to derive stochastic bounds for a larger class of systems which satisfy less restrictive symmetry assumptions. These results are applied to practical examples of tree data networks, such as backhaul networks of Internet service providers, and computer clusters. \\
  {\bf Keywords:} Processor-sharing queueing systems, performance, balanced fairness, poly-symmetry.
\end{abstract}

\section{Introduction}
\label{intro}

Systems of processor-sharing queues with state-dependent service rates
have been extensively used
to model a large variety of real communication and computation systems like
content delivery systems \cite{SV15,SV16},
computer clusters \cite{BC17-1,G16}
and data networks \cite{VL01,MaR00}.
They are natural models for such 
real systems as they capture the complex interactions between different jobs 
and also have a promise of analytical tractability of user performance when subject to stochastic loads.
Indeed, in the past two decades researchers have 
been able to obtain explicit performance expressions and bounds for several such systems, see
\cite{BP04-1,BP04-2,BV04,BV05,JoV11,KMW09,MaR00,SV15,SV16}.

However, few performance results scale well with the system size.
Those that do rely on restrictive assumptions
related to the topology or the symmetry of the system
\cite{MaR00,SV16}.
One of the main goals of this paper is to provide scalable performance results
for a class of processor-sharing systems which find applications in bandwidth-sharing networks and computer clusters. 

One of the key features of processor-sharing systems is the allocation of the service rates per queue in each state.
A particular class of resource allocations
which is more tractable for performance analysis
is characterized by the balance property
which constrains the relative gain in the service rate at one queue
when we remove a job from another queue.
Processor-sharing systems where the resource allocation satisfies this property
are called Whittle networks \cite{S99}.
In particular, if the service rates are constrained by some capacity set,
corresponding to the resources of the real system considered,
then there exists a unique policy which satisfies the balance property while being efficient,
namely balanced fairness \cite{BP03-1}.
In this paper we focus on systems which are constrained by a polymatroid capacity set \cite{F05,SV15}
and operate under balanced fair resource allocation.

It was proved in \cite{SV15} that balanced fairness is Pareto-efficient
when it is applied in polymatroid capacity sets,
which in practice yields explicit recursion formulas for the performance metrics.
However, if no further assumptions are made on the structure of the system,
the time complexity to compute these metrics is exponential with the number of queues.
It was proved in \cite{SV15} that it can be made linear
at the cost of strict assumptions
on the overall symmetry of the capacity set and the traffic intensity at each queue. 
Under symmetry in interaction across queues, it was shown in \cite{SV16} that the performance 
is robust to heterogeneity in loads and system configuration under an appropriate scaling regime. 
However, there is little understanding of performance for scenarios where queues themselves interact 
in heterogeneous fashion. 

In this paper, we consider a scenario
where the processor-sharing system is partitioned into a finite number of parts,
so that queues are exchangeable within each part and asymmetric across different parts.
For such systems, that we call \emph{poly-symmetric}, we obtain a performance expression with computational complexity which is polynomial in the number of queues.
We demonstrate the usefulness of these bounds by applying them to tree data networks, which are representative of backhaul networks,
and to randomly configured heterogeneous computer clusters.
In addition, we provide a monotonicity bound which allows us to bound performance of systems with capacity regions which are `nearly' poly-symmetric.

The paper is organized as follows.
Section \ref{sec:model} introduces the model
and shows that it applies to real systems as varied as
tree data networks and computer clusters.
We also recall known facts about balanced fairness.
In Section \ref{sec:polysymmetry}, we introduce the notion of poly-symmetry
and show that it yields explicit recursion formulas for the performance metrics
which have a complexity that is polynomial in the number of queues in the processor-sharing system.
Finally, Section \ref{sec:bounds} gives stochastic bounds
to compare the performance of different systems.
We conclude in Section \ref{sec:ccl}.

\section{System model}
\label{sec:model}

\subsection{Processor-sharing queueing system with a polymatroid capacity set}
\label{subsec:capacityset}

We consider a system of $n$ processor-sharing queues with coupled service rates
and we denote by $I = \{1,\ldots,n\}$ the set of queue indices.
For each $i \in I$, jobs enter the system at queue $i$
according to some Poisson process with intensity $\lambda_i$
and have i.i.d.\ exponential service requirements with mean $\sigma_i$,
resulting in a traffic intensity $\rho_i = \lambda_i \sigma_i$ at queue $i$.
Jobs leave the system immediately after service completion.
Such a queueing system will be called a \emph{processor-sharing system} throughout the paper.

The system state is described by the vector $x = (x_i : i \in I)$,
where $x_i$ is the number of jobs at queue $i$ for each $i \in I$.
For each $x \in \mathbb{N}^n$, $I(x) = \{i \in I: x_i > 0\}$
denotes the set of active queues in state $x$.
Queues have state-dependent service rates.
For each $x \in \mathbb{N}^n$, $\phi(x) = (\phi_i(x) : i \in I)$ denotes the vector of service rates per queue
when the system is in state $x$.

The system is characterized by a {\it capacity set},
which is defined as the set of all feasible resource allocations
$\phi = (\phi_i : i \in I) \in \mathbb{R}_+^n$.
This capacity set may be specified by practical constraints like
the capacities of the links in a data network
or the service rates of the servers in a computer cluster.
We are interested in queueing systems whose capacity set is
a particular type of polytope called a {\it polymatroid} \cite{F05}.

\begin{definition}
  \label{def:polymatroid}
  A polytope ${\cal C}$ in $\mathbb{R}_+^n$ is a {\it polymatroid} if there exists
  a non-negative function $\mu$ defined on the power set of $I$ such that
  $$
  {\cal C} =
  \left\{
    \phi \in \mathbb{R}_+^n:
    \sum_{i \in A} \phi_i \le \mu(A),
    \quad \forall A \subset I
  \right\}
  $$
  and $\mu$ satisfies the following properties:
  \begin{description}[noitemsep]
    \item Normalization: $\mu(\emptyset) = 0$,
    \item Monotonicity:
      for all $A, B \subset I$, if $A \subset B$, then $\mu(A) \le \mu(B)$,
    \item Submodularity:
      for all $A, B \subset I$,
      $\mu(A) + \mu(B) \ge \mu(A \cup B) + \mu(A \cap B)$.
  \end{description}
  $\mu$ is called the {\it rank function} of the polymatroid ${\cal C}$.
\end{definition}

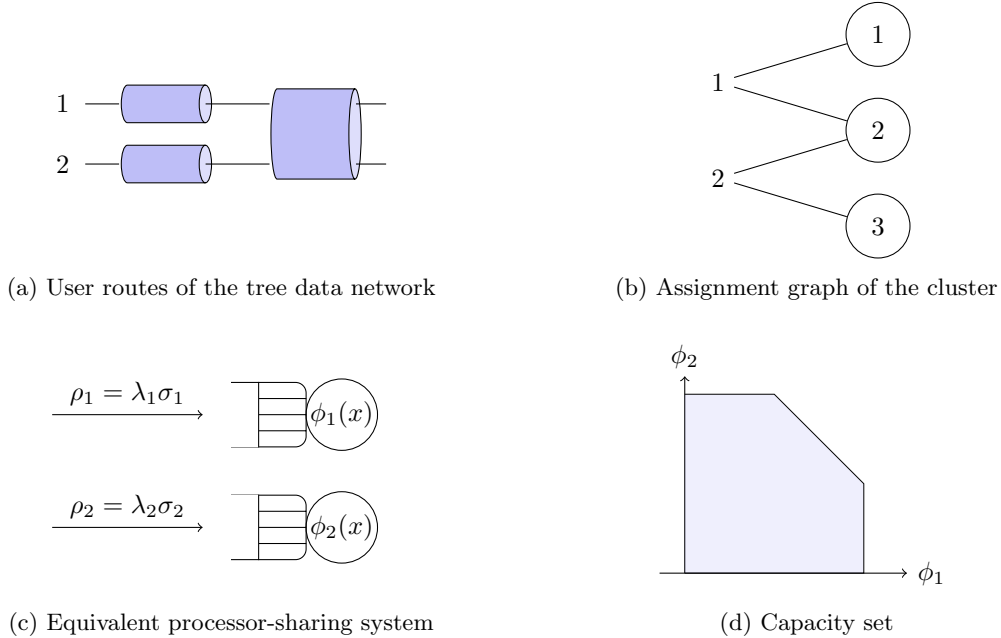
\begin{figure}
  \centering
  \begin{subfigure}[b]{.45\textwidth}
    \centering
    \begin{tikzpicture}
      \node (c1) {};
      \node at ($(c1)+(0,-.8cm)$) (c2) {};
      \node at ($(c1)!.5!(c2)+(2cm,0)$) (c) {};

      \node[
        cylinder,draw=black,aspect=0.7,
        minimum height=1.2cm,minimum width=.5cm,
        cylinder uses custom fill,
        cylinder body fill=myblue!40,
        cylinder end  fill=myblue!20
      ]
      at (c1) {};
      \node[
        cylinder,draw=black,aspect=0.7,
        minimum height=1.2cm,minimum width=.5cm,
        cylinder uses custom fill,
        cylinder body fill=myblue!40,
        cylinder end  fill=myblue!20
      ]
      at (c2) {};
      \node[
        cylinder,draw=black,aspect=0.7,
        minimum height=1.2cm,minimum width=1.2cm,
        cylinder uses custom fill,
        cylinder body fill=myblue!40,
        cylinder end  fill=myblue!20
      ]
      at (c) {};

      \node at ($(c1)+(-1.3cm,0)$) {$1$};
      \draw[-,rounded corners=.05cm]
      ($(c1)+(-1cm,0)$) -- ($(c1)+(-.55cm,0)$)
      ($(c1)+(.6cm,0)$) -- ($(c)+(-.55cm,.4cm)$)
      ($(c)+(.6cm,.4cm)$) -- ($(c)+(1cm,.4cm)$);

      \node at ($(c2)+(-1.3cm,0)$) {$2$};
      \draw[-,rounded corners=.05cm]
      ($(c2)+(-1cm,0)$) -- ($(c2)+(-.55cm,0)$)
      ($(c2)+(.6cm,0)$) -- ($(c)+(-.55cm,-.4cm)$)
      ($(c)+(.6cm,-.4cm)$) -- ($(c)+(1cm,-.4cm)$);
    \end{tikzpicture}
    \vspace{1cm}
    \caption{User routes of the tree data network}
    \label{fig:extree}
  \end{subfigure}
  ~
  \begin{subfigure}[b]{.45\textwidth}
    \centering
    \begin{tikzpicture}
      \def\scale{.85}

      \node[draw,circle,minimum size=\scale*1cm] (server1) {1};
      \node[draw,circle,minimum size=\scale*1cm]
      at ($(server1)+(0,-\scale*1.5cm)$) (server2) {2};
      \node[draw,circle,minimum size=\scale*1cm]
      at ($(server2)+(0,-\scale*1.5cm)$) (server3) {3};

      \node at ($(server1)!.5!(server2)+(-\scale*2.5cm,0)$) (class1) {1};
      \node at ($(server2)!.5!(server3)+(-\scale*2.5cm,0)$) (class2) {2};

      \draw[-] (class1) -- (server1);
      \draw[-] (class1) -- (server2);
      \draw[-] (class2) -- (server2);
      \draw[-] (class2) -- (server3);
    \end{tikzpicture}
    \caption{Assignment graph of the cluster}
    \label{fig:exmgraph}
  \end{subfigure}
  \\[.4cm]
  \centering
  \begin{subfigure}[b]{.45\textwidth}
    \centering
    \begin{tikzpicture}
      \node[draw,circle,minimum size=.95cm] (server1) {};
      \node at (server1) {$\phi_1(x)$};
      \node[draw,circle,minimum size=.95cm]
      at ($(server1)+(0,-1.5cm)$) (server2) {};
      \node at (server2) {$\phi_2(x)$};

      \node[queueps] (queue1) at ($(server1)+(-1.05cm,0)$) {};
      \fill[white] ([xshift=-\pgflinewidth-1pt,yshift=-\pgflinewidth+1pt] queue1.north west)
      rectangle ([xshift=\pgflinewidth+4pt,yshift=\pgflinewidth-1pt] queue1.south west);
      \fill[white] ([xshift=-\pgflinewidth-1pt,yshift=-\pgflinewidth] queue1.north west)
      rectangle ([xshift=15pt,yshift=\pgflinewidth] queue1.south west);
      \draw[-] ([xshift=15pt] queue1.north west)
      -- ([xshift=15pt] queue1.south west);

      \node[queueps] (queue2) at ($(server2)+(-1.05cm,0)$) {};
      \fill[white] ([xshift=-\pgflinewidth-1pt,yshift=-\pgflinewidth+1pt] queue2.north west)
      rectangle ([xshift=\pgflinewidth+4pt,yshift=\pgflinewidth-1pt] queue2.south west);
      \fill[white] ([xshift=-\pgflinewidth-1pt,yshift=-\pgflinewidth] queue2.north west)
      rectangle ([xshift=15pt,yshift=\pgflinewidth] queue2.south west);
      \draw[-] ([xshift=15pt] queue2.north west)
      -- ([xshift=15pt] queue2.south west);

      \draw[->] ($(queue1)+(-2.8cm,0)$)
      -- node[midway,above] {$\rho_1 = \lambda_1 \sigma_1$}
      ($(queue1)+(-.8cm,0)$);
      \draw[->] ($(queue2)+(-2.8cm,0)$)
      -- node[midway,above] {$\rho_2 = \lambda_2 \sigma_2$}
      ($(queue2)+(-.8cm,0)$);
    \end{tikzpicture}
    \vspace{.4cm}
    \caption{Equivalent processor-sharing system}
    \label{fig:expsnet}
  \end{subfigure}
  ~
  \begin{subfigure}[b]{.45\textwidth}
    \centering
    \begin{tikzpicture}
      \begin{axis}[
          axis equal,
          axis lines=middle,
          axis line style={->},
          xlabel={$\phi_1$}, xtick=\empty,
          every axis x label/.style={at={(ticklabel* cs:1.1)}},
          ylabel={$\phi_2$}, ytick=\empty,
          every axis y label/.style={at={(ticklabel* cs:1.1)}},
          xmin = 0, xmax = 2.2,
          ymin = 0, ymax = 2.2,
          height=4.2cm,
        ]
        \addplot[line width=.01cm,fill=myblue,fill opacity=0.1] coordinates{(0,0) (2,0) (2,1) (1,2) (0,2)};
      \end{axis}
    \end{tikzpicture}
    \caption{Capacity set}
    \label{fig:expolymatroid}
  \end{subfigure}
  \caption{A tree data network and a computer cluster with their representation as a processor-sharing system with $n = 2$ queues}
  \label{fig:exmodel}
\end{figure}

Before we specify the resource allocation,
we give two examples of real systems that fit into this model.

\subsection{Tree data networks}
\label{subsec:modeltree}

The first example is a data network with a tree topology \cite{BV04},
representative of backhaul networks of Internet service providers.
There are $n$ users that can generate flows in parallel
and we denote by $I = \{1,\ldots,n\}$ the set of user indices.
For any $i \in I$, user $i$ generates data flows according to some Poisson process with intensity $\lambda_i$
that is independent of the other users.
All flows generated by user $i$ follow the same route in the network and
have i.i.d.\ exponentially distributed sizes with mean $\sigma_i$ in bits,
resulting in a traffic intensity $\rho_i = \lambda_i \sigma_i$ in bit/s.
The state of the network is described by the vector $x = (x_i : i \in I)$,
where $x_i$ is the number of ongoing flows of user $i$, for each $i \in I$.

We make the following assumptions on the allocation of the resources.
The capacity of each link can be divided continuously among the flows that cross it.
Also, the resource allocation per flow only depends on the number of flows of each user in progress.
In particular, all flows of a user receive the same capacity,
so that the per-flow resource allocation is entirely defined in any state $x \in \mathbb{N}^n$
by the total capacity $\phi_i(x)$ allocated to flows of user $i$,
for any $i \in I$.

Under these assumptions, we can represent the data network
by a processor-sharing system with $n$ queues, one per user.
For each $i \in I$,
the jobs at queue $i$ in the equivalent processor-sharing system
are the ongoing flows of user $i$ in the data network,
and the service rate of this queue in state $x$ is the total capacity $\phi_i(x)$
allocated to the flows of user $i$.
We will now describe the corresponding capacity set.

Each link can be identified by the set of users that cross it.
Specifically, we can describe the network by a family ${\cal T}$ of subsets of $I$,
where a set $L \subset I$ is in ${\cal T}$
if and only if there is a link crossed by the flows of all users $i \in L$.
We assume that the network is a tree in the following way.

\begin{definition}
  The network is called a {\it tree} if for all $L, M \in {\cal T}$,
  $L \cap M \neq \emptyset$ implies that $L \subset M$ or $M \subset L$.
\end{definition}

\noindent There is no loss of generality in assuming that $I \in {\cal T}$,
for if not, the network is a forest where each subtree can be considered independently.
For each $L \in {\cal T}$, we denote by $C_L$ the capacity in bit/s of link $L$.
We assume that all links are constraining since otherwise we can simply ignore the non-constraining ones.
The resource allocation must then satisfy the capacity constraints
\begin{equation}
  \label{eq:treeconstraints}
  \sum_{i \in L} \phi_i(x) \le C_L,
  \quad \forall L \in {\cal T},
  \quad \forall x \in \mathbb{N}^n,
\end{equation}
so that the capacity set is given by
$$
{\cal C} = \left\{
  \phi \in \mathbb{R}_+^n:
  \sum_{i \in L} \phi_i \le C_L,
  \quad \forall L \in {\cal T}
\right\}.
$$

\begin{example}
  Figures \ref{fig:extree}, \ref{fig:expsnet} and \ref{fig:expolymatroid}
  give the example of a tree data network with $2$ users.
  The routes of the users are given in Figure \ref{fig:extree}.
  The flows of each user cross one link that is individual and another that is shared by both users.
  The representation of this data network as a processor-sharing system is given in Figure \ref{fig:expsnet}
  and the corresponding capacity set is given in Figure \ref{fig:expolymatroid}.
  It is easy to see that it is a polymatroid for any value of the link capacities.
\end{example}

The following theorem generalizes this last remark to any tree data network.

\begin{theorem}
  \label{theo:tree}
  The capacity set of a tree data network is a polymatroid
  with rank function $\mu$ defined by
  $$
  \mu(A)
  = \min
  \left\{
    \sum_{L \in \Sigma} C_L:
    \Sigma \subset {\cal T}
    \text{ is a family of disjoints sets s.t. }
    A \subset \bigcup_{L \in \Sigma} L
  \right\}
  $$
  for all non-empty set $A \subset I$.
  In addition, we have $\mu(L) = C_L$ for each $L \in {\cal T}$.
\end{theorem}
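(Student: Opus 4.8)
The tree hypothesis says that $\mathcal{T}$ is a \emph{laminar} family: any two of its members are either disjoint or nested. Together with $I \in \mathcal{T}$ this makes $(\mathcal{T},\subseteq)$ a tree rooted at $I$. The plan is to (i) check that $\mu$ is a well-defined, non-negative, normalized and monotone set function, (ii) show that the polymatroid $\mathcal{P}_\mu = \{\phi \in \mathbb{R}_+^n : \sum_{i\in A}\phi_i \le \mu(A),\ \forall A \subseteq I\}$ coincides with $\mathcal{C}$, (iii) establish submodularity of $\mu$ (the crux), and (iv) derive $\mu(L)=C_L$. First note that $\mu$ is finite and well-defined: since $A \subseteq I \in \mathcal{T}$, the singleton family $\{I\}$ is an admissible disjoint cover of every $A$, so the minimum runs over a non-empty set; moreover $\mu \ge 0$ because all $C_L \ge 0$, and $\mu(\emptyset)=0$ by convention (empty cover).

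Monotonicity and the set equality are straightforward. If $A \subseteq B$, every admissible disjoint cover of $B$ is also one of $A$ (as $A \subseteq B \subseteq \bigcup_{L\in\Sigma}L$), so the minimum defining $\mu(A)$ runs over a larger family and $\mu(A)\le \mu(B)$. For $\mathcal{C}\subseteq\mathcal{P}_\mu$ I would fix $\phi\in\mathcal{C}$ and a cover $\Sigma$ attaining $\mu(A)$, and use $\phi\ge 0$, $A\subseteq\bigcup_{L\in\Sigma}L$ and disjointness to get $\sum_{i\in A}\phi_i \le \sum_{L\in\Sigma}\sum_{i\in L}\phi_i \le \sum_{L\in\Sigma}C_L = \mu(A)$. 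For the reverse inclusion, the cover $\{L\}$ gives $\mu(L)\le C_L$, so any $\phi\in\mathcal{P}_\mu$ satisfies $\sum_{i\in L}\phi_i \le \mu(L) \le C_L$ for every $L\in\mathcal{T}$, i.e.\ $\phi\in\mathcal{C}$.

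The main obstacle is submodularity, which I would prove by an uncrossing argument using laminarity (write $\mathrm{cap}(\Sigma)=\sum_{L\in\Sigma}C_L$). Let $\Sigma_A,\Sigma_B$ be disjoint covers attaining $\mu(A),\mu(B)$. Their union is a laminar subfamily of $\mathcal{T}$, viewed as a multiset in which a set lying in both carries multiplicity two. Let $\Sigma_\cup$ be its maximal members: these are pairwise disjoint, lie in $\mathcal{T}$, and cover $\bigcup_{L\in\Sigma_A}L \cup \bigcup_{L\in\Sigma_B}L \supseteq A\cup B$, so $\Sigma_\cup$ is an admissible cover of $A\cup B$. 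I then claim the surviving copies cover $A\cap B$: for $i\in A\cap B$ choose $S\in\Sigma_A$ and $T\in\Sigma_B$ with $i\in S\cap T$; by laminarity $S,T$ are nested, and if $S\subsetneq T$ (resp.\ $T\subsetneq S$) then $S$ (resp.\ $T$) is non-maximal and hence left over, while if $S=T$ this common set carries multiplicity two, so one copy survives the removal of a single maximal occurrence. Taking the maximal members $\Sigma_\cap$ of this leftover multiset yields a disjoint admissible cover of $A\cap B$. Because $\Sigma_\cup$ uses each maximal set exactly once and $\Sigma_\cap$ is drawn from the surviving copies, $\mathrm{cap}(\Sigma_\cup)+\mathrm{cap}(\Sigma_\cap)$ does not exceed the capacity of $\Sigma_A$ and $\Sigma_B$ counted with multiplicity, namely $\mu(A)+\mu(B)$; hence $\mu(A\cup B)+\mu(A\cap B)\le \mu(A)+\mu(B)$.

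Finally, for $\mu(L)=C_L$ the bound $\mu(L)\le C_L$ is again the cover $\{L\}$, and the reverse inequality is precisely where the standing assumption that every link is constraining enters: since the constraint $\sum_{i\in L}\phi_i \le C_L$ is not redundant, there is $\phi^\ast\in\mathcal{C}$ saturating it, and then $C_L=\sum_{i\in L}\phi^\ast_i \le \mu(L)$ because $\phi^\ast\in\mathcal{C}=\mathcal{P}_\mu$. (This hypothesis is genuinely needed: for $n=2$, $\mathcal{T}=\{\{1\},\{2\},\{1,2\}\}$ with $C_{\{1,2\}}$ large, the top constraint is redundant and $\mu(\{1,2\})=C_{\{1\}}+C_{\{2\}}<C_{\{1,2\}}$.) Combining (i)--(iv), $\mathcal{C}=\mathcal{P}_\mu$ with $\mu$ a normalized, monotone, submodular rank function, which is the claim.
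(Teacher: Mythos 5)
Your proof is correct, and it takes a genuinely different route from the paper's. The paper treats the combinatorial core as a black box: after normalizing so that $\mathcal{T}$ contains all singletons, it invokes Fujishige's Theorems 2.5 and 2.6 on intersecting-submodular functions to describe the solution set of the link constraints in all of $\mathbb{R}^n$ by a partition-based function $f$, which need not be monotone (points with negative coordinates are allowed there), and then monotonizes it by setting $\mu(A) = \min\{f(B) : A \subseteq B \subseteq I\}$, deducing submodularity of $\mu$ from that of $f$ and recovering $\mathcal{C} = \mathcal{P} \cap \mathbb{R}_+^n$ by a contradiction argument that exploits negative components. You instead work directly with the min-over-disjoint-covers formula, which makes monotonicity automatic, obtain $\mathcal{C} = \mathcal{P}_\mu$ by an elementary two-sided inclusion, and prove submodularity by uncrossing two optimal covers inside the laminar family; your case analysis ($S \subsetneq T$, $T \subsetneq S$, $S = T$ with multiplicity two) correctly shows the leftover multiset still covers $A \cap B$, and the cost accounting is valid because all capacities are non-negative. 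The paper's approach buys brevity and situates the result within the general theory of submodular functions on intersecting families; yours buys a self-contained, elementary argument that exposes exactly where the tree (laminar) hypothesis enters, and it avoids both the singleton normalization and the detour through $\mathbb{R}^n$. You also make explicit, with a correct two-link counterexample, that the standing assumption that every link is constraining is precisely what yields $\mu(L) = C_L$, a point the paper dispatches with ``one can check.'' The only step you should spell out is that non-redundancy produces a saturating $\phi^\ast \in \mathcal{C}$: take $\psi \ge 0$ satisfying every constraint except that of $L$, and move along the segment from $0$ to $\psi$ until the $L$-constraint becomes tight; convexity keeps all other constraints satisfied along the way.
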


\begin{proof}
  We can certainly assume that ${\cal T}$ contains all the singletons
  since letting $C_{\{i\}} = \min_{L \subset {\cal T}, i \in L} C_L$ for each $i \in I$
  does not modify the capacity set ${\cal C}$.
  We can easily see that the result remains true if we do not make this assumption.

  We apply the following lemma which is a direct consequence of Theorems 2.5 and 2.6 of \cite{F05}
  about intersecting-submodular functions on intersecting families of subsets.

  \begin{lemma}
    Let ${\cal T}$ be a family of subsets of $I$
    and $g : {\cal T} \to \mathbb{R}$ such that,
    for all $L, M \in {\cal T}$ with $L \cap M \neq \emptyset$,
    we have $L \cap M \in {\cal T}$, $L \cup M \in {\cal T}$ and
    $g(L) + g(M) \ge g(L \cup M) + g(L \cap M)$.
    Further assume that $\emptyset, I \in {\cal T}$, $g(\emptyset) = 0$
    and ${\cal T}$ contains all the singletons of $I$.
    Then the set of solutions in $\mathbb{R}^n$ of the equations
    $$
    \sum_{i \in L} \phi_i \le g(L), \quad \forall L \in {\cal T},
    $$
    is given by
    $$
    \left\{
      \phi \in \mathbb{R}^n:
      \sum_{i \in A} \phi_i \le f(A), \quad \forall A \in I
    \right\},
    $$
    where $f$ is the real-valued, normalized, submodular function defined on the power set of $I$ by
    $$
    f(A) = \min\left\{
      \sum_{L \in \Sigma} g(L):
      \Sigma \subset {\cal T} \text{ is a partition of } A
    \right\},
    \quad \forall A \subset I.
    $$
  \end{lemma}

  \noindent
  The definition of a tree ensures that
  ${\cal T} \cup \{\emptyset\}$ satisfies the assumptions of the lemma,
  with the function $g$ defined on ${\cal T} \cup \{\emptyset\}$ by $g(L) = C_L$ for any $L \in {\cal T}$ and $g(\emptyset) = 0$.
  Hence, the set of solutions of the capacity constraints \eqref{eq:treeconstraints} in $\mathbb{R}^n$ is
  $$
  {\cal P} =
  \left\{
    \phi \in \mathbb{R}^n :
    \sum_{i \in A} \phi_i \le f(A),
    \quad \forall A \subset I
  \right\}
  $$
  where $f$ is the normalized, submodular function given by
  $$
  f(A)
  = \min \left\{
    \sum_{ L \in \Sigma} C_L:
    \Sigma \subset {\cal T} \text{ is a partition of } A
  \right\},
  \quad \forall A \subset I.
  $$
  Note that no claim about the monotonicity of $f$ can be made above
  because the points in ${\cal P}$ can have negative components.
  This is illustrated in Figure \ref{fig:treeset},
  where the intersection point of the sides of ${\cal P}$
  corresponding to the sets $\{1\}$ and $\{1,2\}$ has a negative ordinate because $f(\{1,2\}) < f(\{1\})$.

  \begin{figure}
    \centering
    \begin{tikzpicture}
      \begin{axis}[
          axis equal,
          axis lines=middle,
          axis line style={->},
          xmin = -1.1, xmax = 5.4,
          ymin = -.2, ymax = 1.2,
          xlabel={$\phi_1$}, xtick=\empty,
          every axis x label/.style={at={(ticklabel* cs:1.05)}},
          ylabel={$\phi_2$}, ytick=\empty,
          every axis y label/.style={at={(ticklabel* cs:1.05)}},
          height=6cm,
        ]
        \addplot[draw=none,fill=myred,fill opacity=0.08] coordinates{(-1,-1.8) (-1,2) (0,2) (0,0) (3,0) (4,-1) (4,-1.8)};
        \addplot[draw=none,fill=myblue,fill opacity=0.2] coordinates{(0,0) (0,2) (1,2) (3,0)};
        \addplot[-,myblue] coordinates{(0,0) (0,2) (1,2) (3,0) (0,0)};
        \addplot[densely dashdotted,myred] coordinates{(-1,2) (1,2) (4,-1) (4,-1.8)};

        \addplot[dashed] coordinates{(4.5,-1.5) (4,-1)};
        \addplot[dashed] coordinates{(1,2) (.5,2.5)};
        \node[align=center] at (axis cs:2.9,2.35) {$\mu(\{1,2\}) = f(\{1,2\})$};
        \addplot[dashed] coordinates{(4,.5) (4,-1)};
        \node[anchor=north] at (axis cs:4.7,-.13) {$f(\{1\})$};

        \addplot[dashed] coordinates{(3,.5) (3,-.4)};
        \node[align=center,anchor=north] at (axis cs:2.3,-.13) {$\mu(\{1\})$ \\ $= f(\{1,2\})$};

        \node[text=myred] at (axis cs:-.5,2.35) {${\cal P}$};
        \node[text=myblue] at (axis cs:1,1) {${\cal C}$};
      \end{axis}
    \end{tikzpicture}
    \caption{Construction of the capacity set of a tree data network in $\mathbb{R}_+^n$
    from the set of solutions of its capacity constraints in $\mathbb{R}^n$}
    \label{fig:treeset}
  \end{figure}
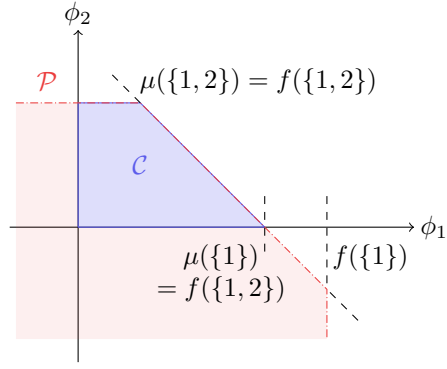

  Since the components of a vector of resource allocation are always positive,
  the capacity set ${\cal C}$ of the data network is given by ${\cal C} = {\cal P} \cap \mathbb{R}_+^n$.
  As we will see, since we restrict ourselves to points with positive components,
  the function $\mu$ which characterizes ${\cal C}$
  is not only normalized and submodular like $f$ but also non-decreasing.
  This is illustrated in Figure \ref{fig:treeset}, which shows that we can replace $f(\{1\})$ by $f(\{1,2\})$
  to describe the side corresponding to the set $\{1\}$ in ${\cal C}$.

  More formally, we prove that ${\cal C}$ is equal to the polymatroid ${\cal C}^\prime$ with rank function $\mu$ given by
  $$
  \mu(A)
  = \min \left\{
    f(B):
    A \subset B \subset I
  \right\},
  \quad \forall A \subset I.
  $$
  One can check that this function $\mu$ coincides with the one given in the theorem statement.
  We first show that $\mu$ is indeed a rank function
  and then we prove that ${\cal C}$ is equal to ${\cal C}^\prime$.

  The normalization of $\mu$ follows from that of $f$.
  Also $\mu$ is non-decreasing by construction.
  Finally, for each $A, B \subset I$,
  we have
  $\mu(A) + \mu(B) = f(A^\prime) + f(B^\prime)$
  for some $A^\prime, B^\prime \subset I$
  such that $A \subset A^\prime$ and $B \subset B^\prime$,
  and also
  $$
  f(A^\prime) + f(B^\prime)
  \ge f(A^\prime \cup B^\prime)
  + f(A^\prime \cap B^\prime)
  \ge \mu(A \cup B) + \mu(A \cap B),
  $$
  where the first inequality holds by submodularity of $f$
  and the second by definition of $\mu$,
  since $A \cup B \subset A^\prime \cup B^\prime$
  and $A \cap B \subset A^\prime \cap B^\prime$.
  Hence $\mu$ is submodular.

  We finally prove that ${\cal C} = {\cal C}^\prime$.
  It is clear that any vector in ${\cal C}^\prime$ is also in ${\cal C} = {\cal P} \cap \mathbb{R}_+^n$
  since $\mu(A) \le f(A)$ for all $A \subset I$.
  Conversely, consider $\phi \in {\cal C}$.
  If $\phi$ is not in ${\cal C}^\prime$, then there is $A \subset I$ so that
  $\sum_{i \in A} \phi_i > \mu(A)$,
  which implies that $\mu(A) < f(A)$.
  By definition of $\mu$, it follows that
  there is $B \subset I$ so that $A$ is a strict subset of $B$ and $f(B) = \mu(A)$.
  But then
  $$
  \sum_{i \in B \setminus A} \phi_i
  = \sum_{i \in B} \phi_i - \sum_{i \in A} \phi_i
  < f(B) - \mu(A) = 0,
  $$
  so that at least one component of $\phi$ is negative. This is a contradiction.
\end{proof}

\begin{example}
  \begin{figure}
    \centering
    \begin{subfigure}[b]{.48\textwidth}
      \centering
      \begin{tikzpicture}
        \node (c1) {};
        \node at ($(c1)+(0,-.8cm)$) (c2) {};
        \node at ($(c1)!.5!(c2)+(1.8cm,0)$) (c12) {};
        \node at ($(c2)+(1.8cm,-.8cm)$) (c3) {};
        \node at ($(c12)!.5!(c3)+(1.8cm,0)$) (c) {};

        \node[
          cylinder,draw=black,aspect=0.7,
          minimum height=1.2cm,minimum width=.5cm,
          cylinder uses custom fill,
          cylinder body fill=myblue!40,
          cylinder end  fill=myblue!20
        ]
        at (c1) {};
        \node[
          cylinder,draw=black,aspect=0.7,
          minimum height=1.2cm,minimum width=.5cm,
          cylinder uses custom fill,
          cylinder body fill=myblue!40,
          cylinder end  fill=myblue!20
        ]
        at (c2) {};
        \node[
          cylinder,draw=black,aspect=0.7,
          minimum height=1.2cm,minimum width=.5cm,
          cylinder uses custom fill,
          cylinder body fill=myblue!40,
          cylinder end  fill=myblue!20
        ]
        at (c3) {};
        \node[
          cylinder,draw=black,aspect=0.7,
          minimum height=1.2cm,minimum width=.7cm,
          cylinder uses custom fill,
          cylinder body fill=myblue!40,
          cylinder end  fill=myblue!20
        ]
        at (c12) {};
        \node[
          cylinder,draw=black,aspect=0.7,
          minimum height=1.2cm,minimum width=.8cm,
          cylinder uses custom fill,
          cylinder body fill=myblue!40,
          cylinder end  fill=myblue!20
        ]
        at (c) {};

        \node at ($(c1)+(-1.1cm,0)$) {$1$};
        \draw[-,rounded corners=.05cm]
        ($(c1)+(-.8cm,0)$) -- ($(c1)+(-.55cm,0)$)
        ($(c1)+(.6cm,0)$) -- ($(c12)+(-.55cm,.05cm)$)
        ($(c12)+(.6cm,.05cm)$) -- ($(c)+(-.55cm,.1cm)$)
        ($(c)+(.6cm,.1cm)$) -- ($(c)+(.8cm,.1cm)$);

        \node at ($(c2)+(-1.1cm,0)$) {$2$};
        \draw[-,rounded corners=.05cm]
        ($(c2)+(-.8cm,0)$) -- ($(c2)+(-.55cm,0)$)
        ($(c2)+(.6cm,0)$) -- ($(c12)+(-.55cm,-.05cm)$)
        ($(c12)+(.6cm,-.05cm)$) -- ($(c)+(-.55cm,0)$)
        ($(c)+(.6cm,0)$) -- ($(c)+(.8cm,0)$);

        \node at ($(c3)+(-1.1cm,0)$) {$3$};
        \draw[-,rounded corners=.05cm]
        ($(c3)+(-.8cm,0)$) -- ($(c3)+(-.55cm,0)$)
        ($(c3)+(.6cm,0)$) -- ($(c)+(-.55cm,-.1cm)$)
        ($(c)+(.6cm,-.1cm)$) -- ($(c)+(.8cm,-.1cm)$);
      \end{tikzpicture}
      \vspace{.5cm}
      \caption{User routes}
      \label{fig:extreegeneralgraph}
    \end{subfigure}
    \qquad
    \begin{subfigure}[b]{.42\textwidth}
      \centering
      \begin{tikzpicture}
        \begin{axis}[
            view={125}{13},
            axis equal,
            axis lines=middle,
            axis line style={->},
            xlabel={$\phi_1$}, xtick=\empty,
            every axis x label/.style={at={(ticklabel* cs:1.1)}},
            ylabel={$\phi_2$}, ytick=\empty,
            every axis y label/.style={at={(ticklabel* cs:1.1)}},
            zlabel={$\phi_3$}, ztick=\empty,
            every axis z label/.style={at={(ticklabel* cs:1.1)}},
            xmin = 0, xmax = 2.2,
            ymin = 0, ymax = 2.2,
            zmin = 0, zmax = 2.2,
            height=5.5cm,
          ]

          \addplot3[line width=.01cm,fill=myblue,fill opacity=0.1]
          coordinates{(2,0,0) (2,1,0) (2,1,1) (2,0,2) (2,0,0)};
          \addplot3[line width=.01cm,fill=myblue,fill opacity=0.1]
          coordinates{(0,2,0) (1,2,0) (1,2,1) (0,2,2) (0,2,0)};
          \addplot3[line width=.01cm,fill=myblue,fill opacity=0.1]
          coordinates{(0,0,2) (2,0,2) (0,2,2) (0,0,2)};

          \addplot3[draw=none,fill=myblue,fill opacity=0.1]
          coordinates{(2,1,1) (1,2,1) (0,2,2) (2,0,2)};
          \addplot3[draw=none,fill=myblue,fill opacity=0.1]
          coordinates{(2,1,0) (1,2,0) (1,2,1) (2,1,1)};
          \addplot3[line width=.01cm] coordinates{(2,1,0) (1,2,0)};
          \addplot3[line width=.01cm] coordinates{(1,2,1) (2,1,1)};
        \end{axis}
      \end{tikzpicture}
      \caption{Capacity set}
      \label{fig:extreegeneralregion}
    \end{subfigure}
    \caption{Representation of a tree data network}
    \label{fig:extreegeneral}
  \end{figure}
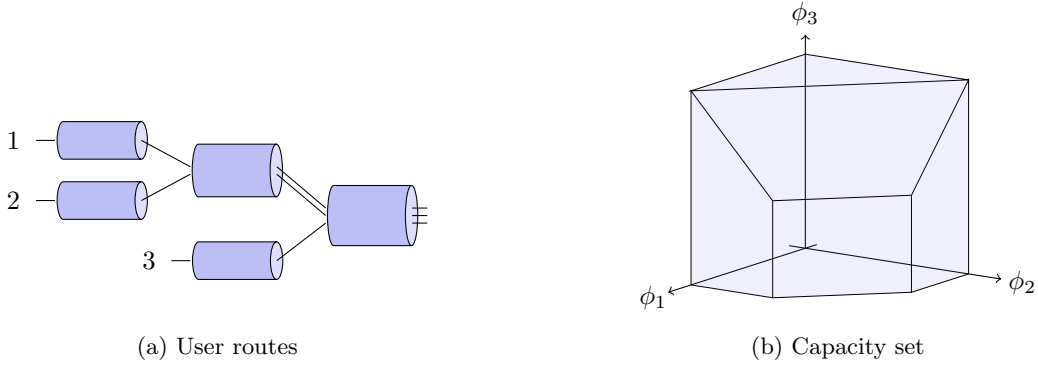
  \label{ex:tree}
  Figure \ref{fig:extreegeneral} gives the example of a tree data network with its capacity set.
  The routes of the users are given in Figure \ref{fig:extreegeneralgraph}.
  Each link is labeled with the set of user indices whose flows cross this link.
  The capacity constraints are
  $$
  \phi_1 \le C_{\{1\}}, \quad
  \phi_2 \le C_{\{2\}}, \quad
  \phi_3 \le C_{\{3\}}, \quad
  \phi_1 + \phi_2 \le C_{\{1,2\}}, \quad
  \phi_1 + \phi_2 + \phi_3 \le C_{\{1,2,3\}}.
  $$
  The rank function $\mu$ of the capacity set is given by
  \begin{align*}
    \mu(\{1\}) &= C_{\{1\}}, ~
    &\mu(\{1,2\}) &= C_{\{1,2\}}, ~
    &\mu(\{1,2,3\}) &= C_{\{1,2,3\}}. \\
    \mu(\{2\}) &= C_{\{2\}}, ~
    &\mu(\{1,3\}) &= \min\left( C_{\{1\}} + C_{\{3\}}, C_{\{1,2,3\}} \right), \\
    \mu(\{3\}) &= C_{\{3\}}, ~
    &\mu(\{2,3\}) &= \min\left( C_{\{2\}} + C_{\{3\}}, C_{\{1,2,3\}} \right),
  \end{align*}
\end{example}

\subsection{Computer clusters}
\label{subsec:modelcluster}

We consider a cluster of $m$ servers
which can be pooled to process jobs in parallel.
The set of servers is denoted by $S = \{1,\ldots,m\}$.
There are $n$ classes of jobs and we denote by $I = \{1,\ldots,n\}$ the set of class indices.
For any $i \in I$, class-$i$ jobs enter the cluster as a Poisson process with intensity $\lambda_i$
and have i.i.d.\ exponential service requirements with mean $\sigma_i$,
resulting in a traffic intensity $\rho_i = \lambda_i \sigma_i$ for class $i$.
Jobs leave the cluster immediately after service completion.
The state of the cluster is described by the vector $x = (x_i : i \in I)$,
where $x_i$ is the number of jobs of class $i$, for each $i \in I$.

The class of a job defines the set of servers that can process it.
The server assignment is given by a family
$(S_i : i \in I)$ of subsets of $S$,
where $S_i$ denotes the set of servers that can serve class-$i$ jobs,
for each $i \in I$.
Equivalently, the server assignment can be described by a bipartite graph
$$
G = \left( I, S, \bigcup_{i \in S} ( \{i\} \times S_i ) \right)
$$
called the {\it assignment graph} of the computer cluster.
The service capacity of server $s$ is $\mu_s$ for each $s = 1,\ldots,m$.
For any set $A \subset I$ of job classes, we let
\begin{equation}
  \label{eq:mucluster}
  \mu(A) = \sum_{s \in \bigcup_{i \in A} S_i} \mu_s
\end{equation}
denote the aggregate capacity available for the classes in $A$.

We make the following assumptions on the allocation of the server capacities.
Servers can be pooled to process jobs in parallel.
When a job is in service on several servers, its service rate is the sum of the rates allocated by each server to this job.
We also assume that the capacity of each server can be divided continuously among the jobs it can serve.
Finally, the allocation of the service rates per job only depends on the number of jobs of each class in the cluster.
In particular, all jobs of a class receive service at the same rate,
so that the per-job resource allocation is entirely defined in any state $x \in \mathbb{N}^n$
by the total capacity $\phi_i(x)$ allocated to class-$i$ jobs, for each $i \in I$.

Under these assumptions, we can describe the evolution of the cluster with a
processor-sharing system with $n$ queues, one per class.
For each $i \in I$, queue $i$ contains class-$i$ jobs
and its service rate in state $x$ is the total capacity $\phi_i(x)$ allocated to class-$i$ jobs collectively.
It was proved in \cite{SV15} that the capacity set of such a cluster is a polymatroid
and that the function $\mu$ defined by \eqref{eq:mucluster} is its rank function.

\begin{example}
  Figure \ref{fig:exmgraph} gives the assignment graph for an example of a computer cluster,
  where job classes are on the left and servers are on the right.
  Server $2$ can serve both classes whereas servers $1$ and $3$ are specialized.
  The corresponding processor-sharing system with $2$ queues is shown in Figure \ref{fig:expsnet}
  and its capacity set, which is a polymatroid in $\mathbb{R}_+^2$, is depicted Figure \ref{fig:expolymatroid}.
  The vertical and horizontal sides correspond to the individual constraints
  of classes $1$ and $2$,
  with $\mu(\{1\}) = \mu_1 + \mu_2$
  and $\mu(\{2\}) = \mu_2 + \mu_3$.
  The diagonal side corresponds to the joint constraint on classes $1$ and $2$,
  with $\mu(\{1,2\}) = \mu_1 + \mu_2 + \mu_3$.
\end{example}

\subsection{Balanced fairness}
\label{subsec:bf}

The service rates are allocated by applying balanced fairness \cite{BP03-1}
in the polymatroid capacity set ${\cal C}$ introduced in Section \ref{subsec:capacityset}.

For each $i \in I$, let $e_i$ denote the $n$-dimensional vector with $1$ in position $i$ and $0$ elsewhere.
Balanced fairness is defined as the only resource allocation that both satisfies the {\it balance property}
$$
\phi_i(x) \phi_j(x-e_i)
= \phi_i(x-e_j) \phi_j(x),
\quad \forall x \in \mathbb{N}^n,
\quad \forall i,j \in I(x),
$$
and maximizes the resource utilization in the following sense:
in any state $x \in \mathbb{N}^n$, $\phi(x) \in {\cal C}$ and there exists $A \subset I(x)$ such that
$$
\sum_{i \in A} \phi_i(x) = \mu(A).
$$
The balance property ensures that there exists a balance function $\Phi$ on $\mathbb{N}^n$ such that
$\Phi(0) = 1$ and
$$
\phi_i(x) = \frac{\Phi(x-e_i)}{\Phi(x)},
\quad \forall x \in \mathbb{N}^n \setminus \{0\},
\quad \forall i \in I(x).
$$
The second condition implies that $\Phi$ satisfies the recursion
$$
\Phi(x) = \max_{A \subset I(x)}
\left\{
  \frac{\sum_{i \in A} \Phi(x-e_i)}{\mu(A)}
\right\},
\quad \forall x \in \mathbb{N}^n \setminus \{0\}.
$$
In \cite{SV15} it is proved that balanced fairness is Pareto-efficient in polymatroid capacity sets,
which means that this maximum is always achieved by the set $I(x) = \{i \in I : x_i > 0\}$ of active queues:
\begin{equation}
  \label{eq:recPhi}
  \Phi(x)
  = \frac{\sum_{i \in I(x)} \Phi(x-e_i)}{\mu(I(x))},
  \quad \forall x \in \mathbb{N}^n \setminus \{0\}.
\end{equation}

Since the balance property is satisfied, the processor-sharing system defined in Section \ref{subsec:capacityset}
is a Whittle network \cite{S99}.
A stationary measure of the system state ${\bf X} = ({\bf X}_i : i \in I)$ is
$$
\pi(x) = \pi(0) \Phi(x) \rho^x,
\quad \forall x \in \mathbb{N}^n,
$$
where we use the notation $\rho^x = \prod_{i \in I} {\rho_i}^{x_i}$ for any $x \in \mathbb{N}^n$.
Substituting (\ref{eq:recPhi}) into this expression yields
$$
\pi(x) = \frac{\sum_{i \in I(x)} \rho_i \pi(x-e_i)}{\mu(I(x))},
\quad \forall x \in \mathbb{N}^n \setminus \{0\}.
$$
It is proved in \cite{BP03-1} that the system is stable,
in the sense that the underlying Markov process is ergodic,
if and only if
$$
\sum_{i \in A} \rho_i < \mu(A),
\quad \forall A \subset I,
$$
which means that the vector of traffic intensities belongs to the interior of the capacity set.
In the rest of the paper, we assume that this condition is satisfied
and we denote by $\pi$ the stationary distribution of the system state.

\subsection{Performance metrics}
\label{subsec:perf}

By abuse of notation, for each $A \subset I$,
we denote by $\pi(A)$ the stationary probability that the set of active queues is $A$:
$$
\pi(A)
= \mathbb{P}\left\{ I({\bf X}) = A \right\}
= \sum_{\substack{x \in \mathbb{N}^n, \\ I(x) = A}} \pi(x),
\quad \forall A \subset I.
$$
For each $i \in I$,
let $L_i = \mathbb{E}[{\bf X}_i]$ denote the mean number of jobs at queue $i$ and,
for each $A \subset I$,
let $L_i(A) = \mathbb{E}\left[ {\bf X}_i | I({\bf X}) = A \right]$
denote the mean number of jobs at queue $i$ given that the set of active queues is $A$.
By the law of total expectation, we have
$$
L_i
= \sum_{A \subset I} L_i(A) \pi(A),
\quad \forall i \in I.
$$
The following theorem gives a recursive formula for $\pi(A)$ and $L_i(A)$
for any $A \subset I$ and $i \in I$.
It is a restatement of Theorem $4$ in \cite{SV15}
using the same idea as Proposition $4$ and Theorem $1$ in \cite{SV16}.
\begin{theorem}
  \label{theo:recsets}
  For each non-empty set $A \subset I$, we have
  \begin{equation}
    \label{eq:recpisets}
    \pi(A)
    = \frac{\sum_{i \in A} \rho_i \pi(A \setminus \{i\})}
    {\mu(A) - \sum_{i \in A} \rho_i}.
  \end{equation}
  Let $i \in I$.
  For each set $A \subset I$,
  we have $L_i(A) = 0$ if $i \notin A$,
  and otherwise
  \begin{equation}
    \label{eq:recLsets}
    \pi(A) L_i(A)
    = \frac
    {
      \rho_i \pi(A \setminus \{i\}) + \rho_i \pi(A)
      + \sum_{j \in A \setminus \{i\}} \rho_j \pi(A \setminus \{j\}) L_i(A \setminus \{j\})
    }
    {\mu(A) - \sum_{j \in A} \rho_j}.
  \end{equation}
\end{theorem}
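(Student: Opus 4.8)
The plan is to derive both recursions by summing the stationary balance recursion $\mu(I(x))\,\pi(x) = \sum_{i \in I(x)} \rho_i\,\pi(x-e_i)$ (the $\pi$-form obtained by substituting \eqref{eq:recPhi} into the stationary measure) over all states sharing a prescribed support, while carefully tracking how that support changes under the unit shift $x \mapsto x - e_j$. First I would note that for any $x$ with $I(x) = A$ the recursion reads $\mu(A)\,\pi(x) = \sum_{j \in A} \rho_j\,\pi(x-e_j)$, so summing over $\{x : I(x) = A\}$ and using the definitions $\pi(A) = \sum_{x:\,I(x)=A}\pi(x)$ and $\pi(A)L_i(A) = \sum_{x:\,I(x)=A} x_i\,\pi(x)$ reduces everything to the aggregated quantities.

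The key combinatorial observation I would establish first is that, for a fixed $j \in A$, as $x$ ranges over $\{x : I(x) = A\}$ the shifted vector $y = x - e_j$ has support $A$ when $x_j \geq 2$ and support $A \setminus \{j\}$ when $x_j = 1$, and in each case $x \mapsto y$ is a bijection onto $\{y : I(y) = A\}$ and $\{y : I(y) = A \setminus \{j\}\}$ respectively. This yields the identity $\sum_{x:\,I(x)=A} \pi(x - e_j) = \pi(A) + \pi(A \setminus \{j\})$. Substituting this into the summed recursion, collecting the $\pi(A)$ terms on the left, and dividing by $\mu(A) - \sum_{j \in A}\rho_j$ (positive by the stability assumption $\sum_{j\in A}\rho_j < \mu(A)$) gives \eqref{eq:recpisets} at once.

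For the second recursion I would run the same summation but weight each state by $x_i$, obtaining $\mu(A)\,\pi(A)L_i(A) = \sum_{j \in A} \rho_j \sum_{x:\,I(x)=A} x_i\,\pi(x - e_j)$, and then split the shift bookkeeping into two cases. When $j = i$, writing $x_i = y_i + 1$ on the $x_i \geq 2$ part and $x_i = 1$ on the $x_i = 1$ part produces $\pi(A)L_i(A) + \pi(A) + \pi(A\setminus\{i\})$. When $j \neq i$ the coordinate $x_i$ is untouched by the shift, so the two cases contribute $\pi(A)L_i(A)$ (from $x_j \geq 2$) and $\pi(A\setminus\{j\})L_i(A\setminus\{j\})$ (from $x_j = 1$). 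Assembling these, collecting the $\pi(A)L_i(A)$ terms on the left, and dividing by $\mu(A) - \sum_{j \in A}\rho_j$ yields \eqref{eq:recLsets}. The case $i \notin A$ is immediate, since then $x_i = 0$ for every $x$ with $I(x) = A$, forcing $L_i(A) = 0$.

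I expect the main obstacle to be purely the careful case analysis of the support shift: identifying the bijections correctly and combining the $x_i \geq 2$ versus $x_i = 1$ split (and, for $j \neq i$, the $x_j \geq 2$ versus $x_j = 1$ split) without double counting. Everything else is linear manipulation of the aggregated variables, and finiteness of the sums together with positivity of the denominators is guaranteed by the ergodicity condition.
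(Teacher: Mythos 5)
Your proposal is correct. One point of comparison is worth making explicit: the paper does not actually prove Theorem~\ref{theo:recsets} --- it presents it as a restatement of Theorem~4 of \cite{SV15}, following the idea of Proposition~4 and Theorem~1 of \cite{SV16} --- so your derivation is a self-contained proof of a result the paper only cites. Your route is the natural one and matches the spirit of those references and of the paper's own Appendix~\ref{app:recpoly}: there, the set-level recursions of Theorem~\ref{theo:recsets} are aggregated over sets $A$ with $|A|_\Sigma = a$ via substitution bijections such as $A \mapsto A \setminus \{i\}$; you perform the exactly analogous aggregation one level down, summing the state-level recursion $\mu(I(x))\pi(x) = \sum_{j \in I(x)} \rho_j \pi(x-e_j)$ (valid by Pareto-efficiency, i.e.\ by \eqref{eq:recPhi}) over $\{x : I(x) = A\}$ and splitting the shift $x \mapsto x - e_j$ according to $x_j \ge 2$ versus $x_j = 1$. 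Your bijection bookkeeping is right, and both identities drop out exactly as you describe, including the $j=i$ term contributing $\pi(A)L_i(A) + \pi(A) + \pi(A\setminus\{i\})$.

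One small point deserves a sentence in a polished write-up: the rearrangements of infinite sums are justified by non-negativity of all terms (Tonelli), but the final step of collecting $\pi(A)L_i(A)$ on the left and subtracting requires knowing $\pi(A)L_i(A) < \infty$, which is not a formal consequence of ergodicity alone (if $L_i(A) = +\infty$ your identity reads $\infty = \infty$ and nothing can be cancelled). Finiteness of first moments does hold here: since the stability region is open, $(1+\epsilon)\rho$ is still admissible for small $\epsilon > 0$, so $\sum_x \Phi(x)(1+\epsilon)^{|x|}\rho^x < \infty$, and bounding $x_i \le C_\epsilon (1+\epsilon)^{x_i}$ gives $\sum_x x_i \Phi(x)\rho^x < \infty$. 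With that one-line supplement, your proof is complete.
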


Observe that \eqref{eq:recpisets}
allows one to evaluate recursively
$\pi(A) / \pi(\emptyset)$ for each $A \subset I$,
from which $\pi(\emptyset)$ can be computed.
Similarly, for each $i \in I$, \eqref{eq:recLsets} allows one to evaluate recursively
$\pi(A) L_i(A) / \pi(\emptyset)$ for each $A \subset I$ and each $i \in I$,
from which the value of $L_i$ can be deduced.
One could then compute performance metrics
like the mean delay or the mean service rate per queue from $L_i$
by applying Little's law.
Note that the complexity is exponential in the number of queues.

\section{Poly-symmetry}
\label{sec:polysymmetry}

\subsection{Definition}
\label{subsec:defpoly}

The exponential complexity of the formulas of Theorem \ref{theo:recsets} makes it impractical
when we want to predict the performance of large-scale systems.
To cope with this, we introduce the notion of poly-symmetry,
which allows us to obtain formulas with a complexity that is polynomial in the number of queues
at the cost of some regularity assumptions on the capacity set and the traffic intensity at each queue.
Poly-symmetry is a generalization of the notion of symmetry which was considered in \cite{SV15,SV16}.

The following definition will be used subsequently to introduce poly-symmetry.
It is easy to check that it defines an equivalence relation on the set $I$ of indices.

\begin{definition}
  \label{def:exch}
  Let ${\cal C}$ be a polymatroid on $\mathbb{R}_+^n$ and denote its rank function by $\mu$.
  Let $i, j \in I$ with $i \neq j$.
  We say that indices $i$ and $j$ are {\it exchangeable} in ${\cal C}$ if
  $$
  \mu(A \cup \{i\})
  = \mu(A \cup \{j\}),
  \quad \forall A \subset I \setminus \{i,j\}.
  $$
\end{definition}

\noindent As the name suggests, two indices are exchangeable if and only if
exchanging these indices does not modify the capacity set.
Note that the exchangeability of two indices $i$ and $j$
implies that they have the same individual constraints $\mu(\{i\}) = \mu(\{j\})$.
The reverse implication is not true when $n > 2$, as we will see in the following example.

\begin{example}
  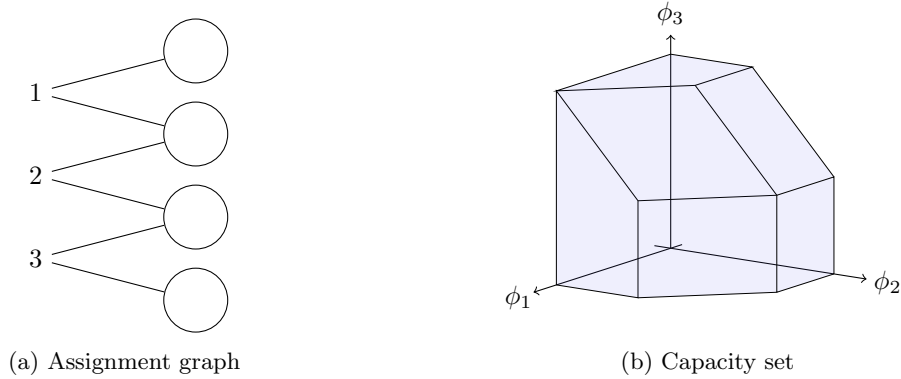
\begin{figure}
    \centering
    \begin{subfigure}[b]{.42\textwidth}
      \centering
      \begin{tikzpicture}
        \def\scale{.85}

        \node[draw,circle,minimum size=\scale*1cm] (server1) {};
        \node[draw,circle,minimum size=\scale*1cm]
        at ($(server1)+(0,-\scale*1.3cm)$) (server2) {};
        \node[draw,circle,minimum size=\scale*1cm]
        at ($(server2)+(0,-\scale*1.3cm)$) (server3) {};
        \node[draw,circle,minimum size=\scale*1cm]
        at ($(server3)+(0,-\scale*1.3cm)$) (server4) {};

        \node at ($(server1)!.5!(server2)+(-\scale*2.5cm,0)$) (class1) {1};
        \node at ($(server2)!.5!(server3)+(-\scale*2.5cm,0)$) (class2) {2};
        \node at ($(server3)!.5!(server4)+(-\scale*2.5cm,0)$) (class3) {3};

        \draw[-] (class1) -- (server1);
        \draw[-] (class1) -- (server2);
        \draw[-] (class2) -- (server2);
        \draw[-] (class2) -- (server3);
        \draw[-] (class3) -- (server3);
        \draw[-] (class3) -- (server4);
      \end{tikzpicture}
      \caption{Assignment graph}
      \label{fig:exexchgraph}
    \end{subfigure}
    \qquad
    \begin{subfigure}[b]{.42\textwidth}
      \centering
      \tdplotsetmaincoords{75}{127}
      \begin{tikzpicture}[tdplot_main_coords]
        \begin{axis}[
            view={125}{13},
            axis equal,
            axis lines=middle,
            axis line style={->},
            xlabel={$\phi_1$}, xtick=\empty,
            every axis x label/.style={at={(ticklabel* cs:1.1)}},
            ylabel={$\phi_2$}, ytick=\empty,
            every axis y label/.style={at={(ticklabel* cs:1.1)}},
            zlabel={$\phi_3$}, ztick=\empty,
            every axis z label/.style={at={(ticklabel* cs:1.1)}},
            xmin = 0, xmax = 2.2,
            ymin = 0, ymax = 2.2,
            zmin = 0, zmax = 2.2,
            height=5.5cm,
          ]
          \addplot3[line width=.01cm,fill=myblue,fill opacity=0.1] coordinates{(2,0,0) (2,0,2) (2,1,1) (2,1,0) (2,0,0)};
          \addplot3[line width=.01cm,fill=myblue,fill opacity=0.1] coordinates{(0,0,2) (2,0,2) (1,1,2) (0,1,2) (0,0,2)};

          \addplot3[draw=none,fill=myblue,fill opacity=0.1] coordinates{(2,1,0) (2,1,1) (1,2,1) (1,2,0) (2,1,0)};
          \addplot3[line width=.01cm] coordinates{(2,1,0) (1,2,0)};
          \addplot3[line width=.01cm] coordinates{(2,1,1) (1,2,1)};

          \addplot3[draw=none,fill=myblue,fill opacity=0.1] coordinates{(0,1,2) (1,1,2) (1,2,1) (0,2,1) (0,1,2)};
          \addplot3[line width=.01cm] coordinates{(0,1,2) (0,2,1)};
          \addplot3[line width=.01cm] coordinates{(1,1,2) (1,2,1)};

          \addplot3[draw=none,fill=myblue,fill opacity=0.1] coordinates{(2,0,2) (2,1,1) (1,2,1) (1,1,2) (2,0,2)};

          \addplot3[draw=none,fill=myblue,fill opacity=0.1] coordinates{(0,2,0) (1,2,0) (1,2,1) (0,2,1) (0,2,0)};
          \addplot3[line width=.01cm] coordinates{(0,2,0) (1,2,0) (1,2,1) (0,2,1) (0,2,0)};
        \end{axis}
      \end{tikzpicture}
      \vspace{.2cm}
      \caption{Capacity set}
      \label{fig:exexchset}
    \end{subfigure}
    \caption{Computer cluster with two exchangeable indices and a third index}
    \label{fig:exexch}
  \end{figure}
  \label{ex:exchindices}
  Consider the computer cluster with the assignment graph depicted in Figure \ref{fig:exexchgraph},
  where all servers have the same unit capacity.
  The corresponding polymatroid capacity set is illustrated in Figure \ref{fig:exexchset}.
  We have $\mu(\{1\}) = \mu(\{3\}) = 2$
  and $\mu(\{1,2\}) = \mu(\{2,3\}) = 3$,
  so that indices $1$ and $3$ are exchangeable.
  Index $2$ is not exchangeable with any of the two other indices because
  $\mu(\{1,2\}) = \mu(\{2,3\}) = 3$
  while $\mu(\{1,3\}) = 4$.
\end{example}

Let us now define poly-symmetry.
Suppose $K \ge 1$
and consider a partition $\Sigma = (I_k: k = 1,\ldots,K)$ of $I$ in $K$ parts.

\begin{definition}
  \label{def:poly}
  Let ${\cal C}$ be a polymatroid in $\mathbb{R}_+^n$.
  ${\cal C}$ is called {\it poly-symmetric} with respect to partition $\Sigma$ if
  for any $k = 1,\ldots,K$, all indices in $I_k$ are pairwise exchangeable in ${\cal C}$.
\end{definition}

\noindent
Since the exchangeability of indices defines an equivalence relation on $I$,
we can consider the quotient set of $I$ by this relation,
which is the partition of $I$ into the maximal sets of pairwise exchangeable indices.
Definition \ref{def:poly} can then be rephrased as follows:
a polymatroid ${\cal C}$ is poly-symmetric with respect to a partition $\Sigma$
if and only if $\Sigma$ is a refinement of the quotient set of $I$ by the exchangeability relation in ${\cal C}$.
It follows directly from the definition that the polymatroid of Example \ref{ex:tree}
is poly-symmetric with respect to partition $( \{1,2\}, \{3\})$
when $C_{\{1\}} = C_{\{2\}}$, as we can see in Figure \ref{fig:extreegeneralregion}.
Also in Example \ref{ex:exchindices}, the polymatroid
is poly-symmetric with respect to partition $( \{1,3\},\{2\} )$.

For each $k = 1,\ldots,K$,
let $n_k = |I_k|$ denote the size of part $k$,
where by part we mean a subset of the partition.
For any $A \subset I$, let
$|A|_\Sigma = (|A \cap I_k|: k = 1,\ldots,K)$
denote the vector of sizes of each part of $A$ in the partition.
The set of these vectors is denoted by
$$
{\cal N} = \prod_{k=1}^K \{0,1,\ldots,n_k\}.
$$
We now give an alternative definition of poly-symmetry
which is equivalent to Definition \ref{def:poly}.
It is a generalization of the definition of symmetry given in \cite{SV15,SV16}.
We will use it to express and prove Theorem \ref{theo:recpoly}.

\begin{definition}
  \label{def:poly2}
  Let ${\cal C}$ be a polymatroid in $\mathbb{R}_+^n$ and denote its rank function by $\mu$.
  ${\cal C}$ is called {\it poly-symmetric} with respect to partition $\Sigma$
  if for any $A \subset I$,
  $\mu(A)$ depends on $A$ only through the size of $A \cap I_k$ for each $k = 1,\ldots,K$.
  Equivalently, there exists a componentwise non-decreasing function $h: {\cal N} \to \mathbb{R}_+$
  such that $\mu(A) = h(|A|_\Sigma)$ for all $A \subset I$.
  We call $h$ the {\it cardinality rank function} of ${\cal C}$ with respect to partition $\Sigma$.
\end{definition}

\noindent {\it Proof of the equivalence.}\
  We only prove that Definition \ref{def:poly} implies Definition \ref{def:poly2};
  the reverse implication is clear.
  For any $A, B \subset I$ with $|A|_\Sigma = |B|_\Sigma$,
  we can write
  $A = (A \setminus B) \sqcup (A \cap B)$ and $B = (B \setminus A) \sqcup (A \cap B)$,
  where $\sqcup$ denotes the union of two disjoint sets.
  Since we have $|A \setminus B|_\Sigma = |B \setminus A|_\Sigma$,
  we are thus reduced to proving that $\mu(A \sqcup C) = \mu(B \sqcup C)$
  for all disjoint sets $A, B, C \subset I$ such that $|A|_\Sigma = |B|_\Sigma$.
  This can be done by ascending induction on the cardinality of $A$ and $B$.
\qed

\begin{example}
  \begin{figure}
    \centering
    \begin{subfigure}[b]{.42\textwidth}
      \centering
      \begin{tikzpicture}
        \def\scale{.85}

        \node[draw,circle,minimum size=\scale*1cm] (server1) {};
        \node[draw,circle,minimum size=\scale*1cm]
        at ($(server1)+(0,-\scale*1.5cm)$) (server2) {};
        \node[draw,circle,minimum size=\scale*1cm]
        at ($(server2)+(0,-\scale*1.5cm)$) (server3) {};

        \node at ($(server1)+(-\scale*2.5cm,0)$) (class1) {1};
        \node at ($(server2)+(-\scale*2.5cm,0)$) (class2) {2};
        \node at ($(server3)+(-\scale*2.5cm,0)$) (class3) {3};

        \draw[-] (class1) -- (server1);
        \draw[-] (class1) -- (server2);
        \draw[-] (class2) -- (server1);
        \draw[-] (class2) -- (server2);
        \draw[-] (class2) -- (server3);
        \draw[-] (class3) -- (server2);
        \draw[-] (class3) -- (server3);
      \end{tikzpicture}
      \caption{Assignment graph}
      \label{fig:expolysymgraph}
    \end{subfigure}
    \qquad
    \begin{subfigure}[b]{.42\textwidth}
      \centering
      \tdplotsetmaincoords{75}{127}
      \begin{tikzpicture}[tdplot_main_coords]
        \begin{axis}[
            view={125}{13},
            axis equal image,
            axis lines=middle,
            axis line style={->},
            xlabel={$\phi_1$}, xtick=\empty,
            every axis x label/.style={at={(ticklabel* cs:1.1)}},
            ylabel={$\phi_2$}, ytick=\empty,
            every axis y label/.style={at={(ticklabel* cs:1.1)}},
            zlabel={$\phi_3$}, ztick=\empty,
            every axis z label/.style={at={(ticklabel* cs:1.1)}},
            xmin = 0, xmax = 2.2,
            ymin = 0, ymax = 3.2,
            zmin = 0, zmax = 2.2,
            height=6cm,
          ]
          \addplot3[draw=none,fill=myblue,fill opacity=0.1] coordinates{(2,0,0) (2,0,1) (2,1,0) (2,0,0)};
          \addplot3[line width=.01cm] coordinates{(2,0,1) (2,0,0) (2,1,0)};

          \addplot3[draw=none,fill=myblue,fill opacity=0.1] coordinates{(0,0,2) (1,0,2) (0,1,2) (0,0,2)};
          \addplot3[line width=.01cm] coordinates{(1,0,2) (0,0,2) (0,1,2)};

          \addplot3[line width=.01cm,fill=myblue,fill opacity=0.1] coordinates{(2,0,1) (1,0,2) (0,1,2) (0,3,0) (2,1,0) (2,0,1)};

          \addplot3[myblue,fill=myblue,fill opacity=.35] coordinates{(0,0,0) (1.5,0,1.5) (0,3,0)};
          \addplot3[->,myblue] coordinates{(0,0,0) (.7,1.4,.7)}
          node[text=myblue,pos=.7,opacity=1,xshift=-.3cm,yshift=.03cm] {\bf $\rho$};
        \end{axis}
      \end{tikzpicture}
      \caption{Capacity set}
      \label{fig:expolysymregion}
    \end{subfigure}
    \caption{Computer cluster with a polymatroid capacity set
    which is poly-symmetric with respect to partition $\Sigma = ( \{1,3\}, \{2\} )$}
    \label{fig:expolysym}
  \end{figure}
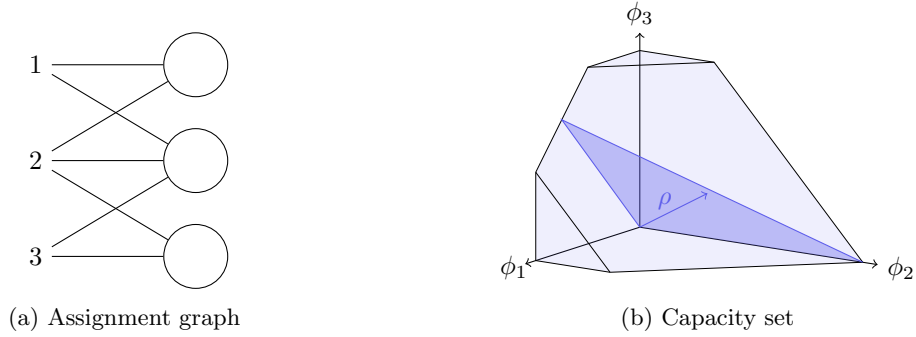
  \label{ex:poly}
  Consider the computer cluster with the assignment graph depicted in Figure \ref{fig:expolysym},
  where all servers have the same unit capacity.
  The corresponding capacity set is illustrated in Figure \ref{fig:expolysymregion}.
  It is poly-symmetric with respect to partition $\Sigma = ( \{1,3\}, \{2\} )$
  and the corresponding cardinality rank function $h$
  is given by $h(0,0) = 0$, $h(1,0) = 2$ and $h(0,1) = h(1,1) = 3$.
\end{example}

\subsection{Performance metrics}
\label{subsec:recpoly}

Let $\Sigma = (I_k: k = 1,\ldots,K)$ be a partition of $I$.
We consider a processor-sharing system with a polymatroid capacity set
which is poly-symmetric with respect to $\Sigma$.
For each $A \subset I$, the vector
$|A|_\Sigma = (|A \cap I_k|: k = 1,\ldots,K)$
gives the number of active queues in each part of the partition
when the set of active queues is $A$.
By abuse of notation, for each $k = 1,\ldots,K$, we denote by $e_k$ the vector of $\mathbb{N}^K$
with $1$ in component $k$ and $0$ elsewhere.

As in Section \ref{subsec:bf}, the resources are allocated by applying balanced fairness in this capacity set
under some vector of traffic intensity $\rho$ which satisfies the stability constraints.
For simplicity of notation, for each $a \in {\cal N}$,
we denote by $\pi(a)$ the probability that the number of active queues in part $k$ is $a_k$ for each $k = 1,\ldots,K$:
$$
\pi(a)
= \mathbb{P}\{ |I({\bf X})|_\Sigma = a \}
= \sum_{\substack{A \subset I, \\ |A|_\Sigma = a}} \pi(A).
$$
For each $k = 1,\ldots,K$,
let $L_k = \mathbb{E}\left[ \sum_{i \in I_k} {\bf X}_i \right]$
denote the mean number of jobs in the queues of part $k$ and,
for each $a \in {\cal N}$,
let $L_k(a) = \mathbb{E}\left[ \sum_{i \in I_k} {\bf X}_i | |I({\bf X})|_\Sigma = a \right]$
denote the mean number of jobs in the queues of part $k$
given that there are $a_l$ active queues in part $l$ for each $l = 1,\ldots,K$.
The regularity assumptions ensure that, for each $k = 1,\ldots,K$,
$\frac1{n_k} L_k$ and $\frac1{n_k} L_k(a)$ for each $a \in {\cal N}$
also give the mean numbers of jobs at queue $i$
for any $i \in I_k$.
By the law of total expectation, we have
$$
L_k = \sum_{a \in {\cal N}} L_k(a) \pi(a),
\quad \forall k = 1,\ldots,K.
$$
The following theorem gives a recursive formula for $\pi(a)$ and $L_k(a)$
that allows one to compute recursively these quantities with a complexity $O(n_1 \cdots n_K)$.
The proof is given in Appendix \ref{app:recpoly}.

\begin{theorem}
  \label{theo:recpoly}
  Consider a system of $n$ processor-sharing queues
  with state-dependent service rates allocated according to balanced fairness
  in a polymatroid capacity set ${\cal C}$.
  Assume that ${\cal C}$ is poly-symmetric with respect to partition $\Sigma$
  and denote by $h$ the corresponding cardinality rank function.
  Further assume that for each $k = 1,\ldots,K$,
  all queues of $I_k$ receive jobs with the same traffic intensity $\varrho_k$,
  i.e.\ $\rho_i = \varrho_k$ for all $i \in I_k$.
  For each $a \in {\cal N} \setminus \{0\}$, we have
  \begin{equation}
    \label{eq:recpipoly}
    \pi(a)
    = \frac{\sum_{k=1}^K (n_k - a_k + 1) \varrho_k \pi(a-e_k)}{h(a) - \sum_{k=1}^K a_k \varrho_k}.
  \end{equation}
  Let $k = 1,\ldots,K$.
  For each $a \in {\cal N}$, we have $L_k(a) = 0$ if $a_k = 0$, and otherwise
  \begin{align}
    \label{eq:recLpoly}
    \pi(a) L_k(a)
    = \frac{
      a_k \varrho_k \pi(a) + (n_k - a_k + 1) \varrho_k \pi(a-e_k)
      + \sum_{l=1}^K (n_l - a_l + 1) \varrho_l \pi(a-e_l) L_k(a-e_l)
    }{h(a) - \sum_{l=1}^K a_l \varrho_l}.
  \end{align}
\end{theorem}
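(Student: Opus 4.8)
The plan is to derive both recursions by aggregating the set-indexed recursions of Theorem~\ref{theo:recsets} over all active sets $A$ sharing a common profile $|A|_\Sigma = a$. The first step is a symmetry reduction. Because $\rho_i = \varrho_k$ for every $i \in I_k$ and $\mu(A) = h(|A|_\Sigma)$ depends on $A$ only through its profile, the balance function $\Phi$ defined by \eqref{eq:recPhi} is invariant under any permutation of indices that fixes each part $I_k$ setwise; this follows by induction on $\sum_i x_i$, since both $\mu(I(x))$ and the summands $\Phi(x - e_i)$ are merely permuted among themselves. As $\rho^x$ is invariant for the same reason, so is $\pi(x) = \pi(0)\Phi(x)\rho^x$. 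Since any two sets with the same profile are related by such a permutation, I conclude that $\pi(A)$ depends on $A$ only through $|A|_\Sigma$, and likewise $L_i(A)$ depends only on $|A|_\Sigma$ and the part containing the active index $i$. I will write $\bar\pi(a)$ for the common value of $\pi(A)$ over sets of profile $a$, so that $\pi(a) = \binom{n}{a}\bar\pi(a)$ with $\binom{n}{a} = \prod_{k=1}^K \binom{n_k}{a_k}$.

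For \eqref{eq:recpipoly}, I would sum \eqref{eq:recpisets} over all $A$ with $|A|_\Sigma = a$. The denominator $\mu(A) - \sum_{i\in A}\rho_i$ equals $h(a) - \sum_k a_k\varrho_k$ for every such $A$, so it factors out. In the numerator I group the indices $i \in A$ by the part $k$ to which they belong: each of the $a_k$ indices of $I_k \cap A$ contributes $\varrho_k \bar\pi(a - e_k)$, and there are $\binom{n}{a}$ sets of profile $a$. Converting back via $\bar\pi = \pi/\binom{n}{\cdot}$ and using the identity $\binom{n}{a}/\binom{n}{a-e_k} = (n_k - a_k + 1)/a_k$ cancels the factor $a_k$ and leaves exactly $(n_k - a_k + 1)\varrho_k\,\pi(a-e_k)$ for each $k$, which is \eqref{eq:recpipoly}.

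For \eqref{eq:recLpoly}, I would start from $\pi(a)L_k(a) = \sum_{A:\,|A|_\Sigma=a}\sum_{i\in I_k}\pi(A)L_i(A)$ and substitute \eqref{eq:recLsets}, again factoring out the common denominator. The first two numerator terms are handled as above: counting $a_k$ active queues of part $k$ in each of the $\binom{n}{a}$ sets produces the contributions $a_k\varrho_k\,\pi(a)$ and $(n_k - a_k + 1)\varrho_k\,\pi(a-e_k)$. The delicate piece is the double-sum term $\sum_{j\in A\setminus\{i\}}\rho_j\pi(A\setminus\{j\})L_i(A\setminus\{j\})$, which I group by the part $l$ of $j$. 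Here I must separate $l\neq k$, where the number of admissible pairs $(i,j)$ is $a_k a_l$, from $l = k$, where it is $a_k(a_k-1)$; I must also track that $L_i(A\setminus\{j\})$ is a mean conditioned on the exact set, relating to $L_k$ at profile $a - e_l$ through the extra factor $(a-e_l)_k$. The pleasant outcome is that after substituting these counts and the binomial ratios, the factors $a_k$, $a_l$ and $a_k-1$ all cancel, so both cases collapse to the same summand $(n_l - a_l + 1)\varrho_l\,\pi(a-e_l)L_k(a-e_l)$, giving the claimed sum over $l$.

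I expect the main obstacle to be precisely this last term: keeping straight the two distinct per-queue conditional means (conditioned on the exact active set versus on the profile), the correct pair counts in the $l=k$ versus $l\neq k$ cases, and verifying that the multiplicities cancel to produce a single uniform summand. The $\pi$-recursion and the first two terms of the $L$-recursion are routine once the symmetry reduction is in place; all the genuine care lies in the cross-part interaction bookkeeping.
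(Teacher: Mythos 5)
Your proposal is correct, and it shares the paper's overall skeleton---sum the set-indexed recursions \eqref{eq:recpisets} and \eqref{eq:recLsets} over all $A$ with $|A|_\Sigma = a$ and factor out the common denominator $h(a) - \sum_k a_k \varrho_k$---but the mechanism you use to evaluate the aggregated sums is genuinely different. The paper never states or uses your exchangeability lemma: it evaluates each sum by interchanging the order of summation, re-indexing the pairs $(i, A)$ with $i \in A \cap I_k$, $|A|_\Sigma = a$ as pairs $(i, B)$ with $B = A \setminus \{i\}$, $i \in I_k \setminus B$, $|B|_\Sigma = a - e_k$; for fixed $B$ there are exactly $n_k - a_k + 1$ admissible $i$, which produces the factor $(n_k - a_k + 1)$ directly against the aggregate $\pi(a - e_k) = \sum_B \pi(B)$, with no claim that $\pi$ is constant on profile classes. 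The same swap disposes of the cross term in \eqref{eq:recLpoly} uniformly in $l$ (the constraint $j \neq i$ is automatic once $i \in B$), so the paper needs no case split between $l = k$ and $l \neq k$. You instead first prove that $\pi(A)$ and $L_i(A)$ depend on $A$ only through its profile and the part of $i$ (your induction on $\Phi$ and the invariance of $\rho^x$ are sound), and then count sets and pairs explicitly via the ratio $\binom{n_k}{a_k}/\binom{n_k}{a_k-1} = (n_k - a_k + 1)/a_k$. Your route proves a stronger structural fact---invariance of the stationary distribution under part-preserving permutations---which is of independent interest and actually substantiates the paper's side remark that $L_k/n_k$ is the mean queue length at any single queue of $I_k$; the price is the bookkeeping you yourself flag: two levels of conditional means related by $L_k(a) = a_k\,\ell_k(a)$, and the degenerate case $a_k = 1$ in the $l = k$ branch, where your divisor $(a-e_k)_k = a_k - 1$ vanishes. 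There the pair count $a_k(a_k-1)$ is zero and, consistently, $L_k(a-e_k) = 0$, so the summand is vacuous rather than ill-defined, but a final write-up should say this explicitly rather than divide by $a_k - 1$.
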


\noindent This result applies to Example \ref{ex:poly} with the partition $\Sigma = ( \{1,3\}, \{2\} )$
when classes $1$ and $3$ have the same traffic intensity.
The set of suitable vectors of traffic intensities is depicted as the darkly shaded region in Figure \ref{fig:expolysymregion}.

In this theorem, we have assumed that the cardinality rank function $h$ was given.
Given a real system like those of Sections \ref{subsec:modeltree} and \ref{subsec:modelcluster}
which is known to be poly-symmetric with regard to some partition $\Sigma = (I_k : k = 1,\ldots,K)$,
one could ask if it is also possible to build $h$ with a complexity $O(n_1 \cdots n_K)$.
This is straightforward for a computer cluster.
Concerning the tree data networks, we can actually apply
a method similar to that of the proof of Theorem \ref{theo:tree}.
Specifically, we first define recursively a concave function $f$ on ${\cal N}$ by
$f(0) = 0$,
$f(a) = C_L$ if there is $L \in {\cal T}$ so that $|L|_\Sigma = a$,
and otherwise
$$
f(a) = \min\{f(b) + f(c): b, c \in {\cal N} \text{ s.t. } b,c \neq a \text{ and } a = b + c\},
$$
from which we can construct $h$ by letting
$$
h(a) = \min\{ f(b): b \in {\cal N} \text{ and } a \le b \},
\quad \forall a \in {\cal N}.
$$

We will now see two examples of real systems
where this result applies.

\subsection{Application to tree data networks}
\label{subsec:multisource}

We consider the simple example of a tree data network
where each user has an individual access line
and all users share an aggregation link which has a capacity $C$ in bit/s.
The user access lines can have $K$ different capacities $r_1,\ldots,r_K$ in bit/s.
This corresponds to the model introduced in \cite{BC16-3} to predict
some performance metrics in Internet service provider access networks,
where the individual access lines represent subscriber lines
which are connected to the aggregation link by the digital subscriber line access multiplexer (DSLAM).

\begin{example}
  \begin{figure}[h]
    \centering
    \begin{tikzpicture}
      \node (c1) {};
      \node at ($(c1)+(0,-.6cm)$) (c2) {};
      \node at ($(c2)+(0,-.6cm)$) (c3) {};
      \node at ($(c3)+(0,-1cm)$) (c4) {};
      \node at ($(c4)+(0,-.6cm)$) (c5) {};
      \node at ($(c1)!.5!(c5)+(3.5cm,0)$) (c) {};

      \node[
        cylinder,draw=black,aspect=0.7,
        minimum height=1.2cm,minimum width=.5cm,
        cylinder uses custom fill,
        cylinder body fill=myblue!40,
        cylinder end  fill=myblue!20
      ]
      at (c1) {};
      \node[
        cylinder,draw=black,aspect=0.7,
        minimum height=1.2cm,minimum width=.5cm,
        cylinder uses custom fill,
        cylinder body fill=myblue!40,
        cylinder end  fill=myblue!20
      ]
      at (c2) {};
      \node[
        cylinder,draw=black,aspect=0.7,
        minimum height=1.2cm,minimum width=.5cm,
        cylinder uses custom fill,
        cylinder body fill=myblue!40,
        cylinder end  fill=myblue!20
      ]
      at (c3) {};
      \node[
        cylinder,draw=black,aspect=0.7,
        minimum height=1.2cm,minimum width=.5cm,
        cylinder uses custom fill,
        cylinder body fill=myblue!40,
        cylinder end  fill=myblue!20
      ]
      at (c4) {};
      \node[
        cylinder,draw=black,aspect=0.7,
        minimum height=1.2cm,minimum width=.5cm,
        cylinder uses custom fill,
        cylinder body fill=myblue!40,
        cylinder end  fill=myblue!20
      ]
      at (c5) {};
      \node[
        cylinder,draw=black,aspect=0.7,
        minimum height=1.7cm,minimum width=1cm,
        cylinder uses custom fill,
        cylinder body fill=myblue!40,
        cylinder end  fill=myblue!20
      ]
      at (c) {};

      \node at ($(c1)+(-1.3cm,0)$) {};
      \draw[-,rounded corners=.05cm]
      ($(c1)+(-1.1cm,0)$) -- ($(c1)+(-.55cm,0)$)
      ($(c1)+(.6cm,0)$) -- ($(c)+(-.8cm,.22cm)$)
      ($(c)+(.85cm,.22cm)$) -- ($(c)+(1.2cm,.22cm)$);

      \node at ($(c2)+(-1.3cm,0)$) {};
      \draw[-,rounded corners=.05cm]
      ($(c2)+(-1.1cm,0)$) -- ($(c2)+(-.55cm,0)$)
      ($(c2)+(.6cm,0)$) -- ($(c)+(-.8cm,.11cm)$)
      ($(c)+(.85cm,.11cm)$) -- ($(c)+(1.2cm,.11cm)$);

      \node at ($(c3)+(-1.3cm,0)$) {};
      \draw[-,rounded corners=.05cm]
      ($(c3)+(-1.1cm,0)$) -- ($(c3)+(-.55cm,0)$)
      ($(c3)+(.6cm,0)$) -- ($(c)+(-.8cm,0)$)
      ($(c)+(.85cm,0)$) -- ($(c)+(1.2cm,0)$);

      \node at ($(c4)+(-1.3cm,0)$) {};
      \draw[-,rounded corners=.05cm]
      ($(c4)+(-1.1cm,0)$) -- ($(c4)+(-.55cm,0)$)
      ($(c4)+(.6cm,0)$) -- ($(c)+(-.8cm,-.11cm)$)
      ($(c)+(.85cm,-.11cm)$) -- ($(c)+(1.2cm,-.11cm)$);

      \node at ($(c5)+(-1.3cm,0)$) {};
      \draw[-,rounded corners=.05cm]
      ($(c5)+(-1.1cm,0)$) -- ($(c5)+(-.55cm,0)$)
      ($(c5)+(.6cm,0)$) -- ($(c)+(-.8cm,-.22cm)$)
      ($(c)+(.85cm,-.22cm)$) -- ($(c)+(1.2cm,-.22cm)$);

      \draw [decorate,decoration={brace,amplitude=4pt}]
      ($(c3)+(-1.5cm,-.2cm)$) -- ($(c1)+(-1.5cm,.2cm)$)
      node[align=center] [black,midway,xshift=-1.6cm] {Access lines \\ with capacity $r_1$};

      \draw [decorate,decoration={brace,amplitude=4pt}]
      ($(c5)+(-1.5cm,-.2cm)$) -- ($(c4)+(-1.5cm,.2cm)$)
      node[align=center] [black,midway,xshift=-1.6cm] {Access lines \\ with capacity $r_2$};

      \node[align=center] at ($(c)+(0,1cm)$)
      {Aggregation link \\ with capacity $C$};
    \end{tikzpicture}
    \caption{User routes}
    \label{fig:multisource}
  \end{figure}
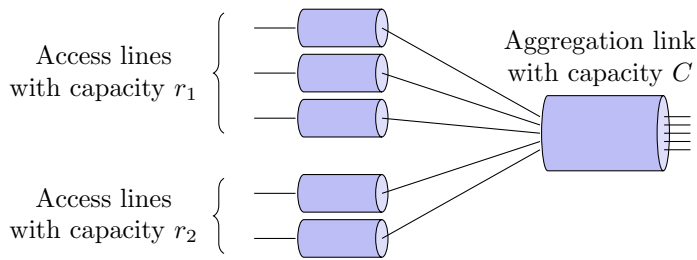
  Figure \ref{fig:multisource} gives a toy example with $K = 2$ possible access rates $r_1$ and $r_2$.
  There are three users with access rate $r_1$ and two users with access rate $r_2$.
  All users are constrained by the aggregation link with capacity $C$.
\end{example}

For each $k = 1,\ldots,K$ we denote by $I_k$ the set of users with access rate $r_k$.
These form a partition of the set $I = \bigsqcup_{k=1}^K I_k$ of users.
Theorem \ref{theo:tree} ensures that
the capacity set of this data network
is a polymatroid with rank function $\mu$ given by
\begin{equation}
  \label{eq:mums}
  \mu(A)
  = \min\left( \sum_{k=1}^K |A \cap I_k| r_k, C \right),
  \quad \forall A \subset I.
\end{equation}
It is poly-symmetric with respect to partition $\Sigma = (I_k: k = 1,\ldots,K)$.
The corresponding cardinality rank function $h$ is given by
$$
h(a) = \min\left(
  \sum_{k=1}^K a_k r_k,
  C
\right),
\quad \forall a \in {\cal N}.
$$
We further assume that for each $k = 1,\ldots,K$, all users with access line $r_k$ have the same traffic intensity $\varrho_k < r_k$.
Then the network is stable whenever $\sum_{k=1}^K n_k \varrho_k < C$,
and it meets the conditions of Theorem \ref{theo:recpoly}.

A metric of interest is the mean throughput per user.
For each $i \in I$,
we denote by $\mathbb{P}_i$ and $\mathbb{E}_i$
the conditional probability measure and expectation given that user $i$ is active,
corresponding to the stationary distribution $\pi_i(x) \propto 1_{x_i > 0} \pi(x)$.
For each $k = 1,\ldots,K$ and each $i \in I_k$,
the mean throughput perceived by user $i$ is then given by
$$
\mathbb{E}_i[\phi_i({\bf X})]
= \frac{\mathbb{E}[\phi_i({\bf X})]}{\mathbb{P}\{ {\bf X}_i > 0 \}}
= \frac{\varrho_k}{\mathbb{P}\{ {\bf X}_i > 0 \}},
$$
where the second equality holds by the conservation equation $\varrho_k = \mathbb{E}[\phi_i({\bf X})]$ for all $i \in I_k$.
Using the notations of Section \ref{subsec:recpoly},
the mean throughput of the users with access rate $r_k$ is given by
$$
\gamma_k = \frac{\varrho_k}
{1 - \sum\limits_{a \in {\cal N}: a_k < n_k} \binom{n_k-1}{a_k} \prod\limits_{l \neq k} \binom{n_l}{a_l} \pi(a)}.
$$
where $\pi(a)$ for each $a \in {\cal N}$ can be computed with a complexity $O(n_1 \cdots n_K)$ by \eqref{eq:recpipoly}.
Other performance metrics such as the mean congestion rate per user can be computed similarly.

\subsection{Application to computer clusters}
\label{subsec:polycluster}

Let $d_1,d_2 \ge 1$.
We consider a computer cluster with $m = d_1 d_2$ servers and $n = d_1 + d_2$ classes.
All servers have the same unit capacity and all jobs have a unit mean size.
The set $I$ of classes is partitioned into two subsets $I_1$ and $I_2$.
$I_1$ contains $d_2$ classes that can each be served by $d_1$ servers
and $I_2$ contains $d_1$ classes that can each be served by $d_2$ servers.
For any $i = 1,\ldots,d_2$, the $i$-th class of $I_1$
can be served by the servers $(i-1) d_1 + j$ for $j = 1,\ldots,d_1$.
For any $i = 1,\ldots,d_1$, the $i$-th class of $I_2$
can be served by the servers $i + (j-1) d_1$ for $j = 1,\ldots,d_2$.
Figure \ref{fig:polycluster} gives a toy example with $d_1 = 2$ and $d_2 = 3$.

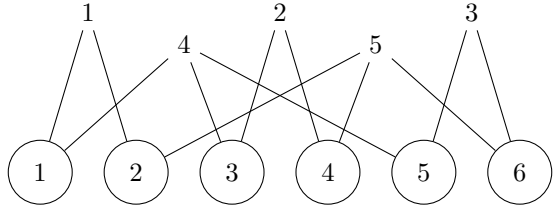
\begin{figure}[h]
  \centering
  \begin{tikzpicture}
    \def\scale{.85}

    \node[draw,circle,minimum size=\scale*1cm] (server1) {1};
    \node[draw,circle,minimum size=\scale*1cm]
    at ($(server1)+(\scale*1.5cm,0)$) (server2) {2};
    \node[draw,circle,minimum size=\scale*1cm]
    at ($(server2)+(\scale*1.5cm,0)$) (server3) {3};
    \node[draw,circle,minimum size=\scale*1cm]
    at ($(server3)+(\scale*1.5cm,0)$) (server4) {4};
    \node[draw,circle,minimum size=\scale*1cm]
    at ($(server4)+(\scale*1.5cm,0)$) (server5) {5};
    \node[draw,circle,minimum size=\scale*1cm]
    at ($(server5)+(\scale*1.5cm,0)$) (server6) {6};

    \node at ($(server1)!.5!(server2)+(0,\scale*2.5cm)$) (class1) {1};
    \node at ($(server3)!.5!(server4)+(0,\scale*2.5cm)$) (class2) {2};
    \node at ($(server5)!.5!(server6)+(0,\scale*2.5cm)$) (class3) {3};
    \node at ($(server2)!.5!(server3)+(0,\scale*2cm)$) (class4) {4};
    \node at ($(server4)!.5!(server5)+(0,\scale*2cm)$) (class5) {5};

    \draw[-] (class1) -- (server1);
    \draw[-] (class1) -- (server2);
    \draw[-] (class2) -- (server3);
    \draw[-] (class2) -- (server4);
    \draw[-] (class3) -- (server5);
    \draw[-] (class3) -- (server6);
    \draw[-] (class4) -- (server1);
    \draw[-] (class4) -- (server3);
    \draw[-] (class4) -- (server5);
    \draw[-] (class5) -- (server2);
    \draw[-] (class5) -- (server4);
    \draw[-] (class5) -- (server6);
  \end{tikzpicture}
  \caption{Computer cluster with $d_1 = 2$ and $d_2 = 3$}
  \label{fig:polycluster}
\end{figure}

Any class of $I_1$ shares exactly one server with any class of $I_2$,
and this server is dedicated to these two classes.
The rank function of this cluster is thus given by
$$
\mu(A) = |A \cap I_1| d_1 + |A \cap I_2| d_2 - |A \cap I_1| \times |A \cap I_2|,
\quad \forall A \subset I.
$$
The polymatroid capacity set defined by this rank function is poly-symmetric with respect to partition $\Sigma = (I_1, I_2)$
and the corresponding cardinality rank function is given by
$$
h(a) = a_1 d_1 + a_2 d_2 - a_1 a_2,
\quad \forall a \in {\cal N}.
$$

For each $k = 1,2$, assume that all classes in $I_k$ have the same traffic intensity $\varrho_k$.
Further assume that the vector of traffic intensities $\varrho = (\varrho_1, \varrho_2)$ stabilizes the system, that is
$$
a_1 \varrho_1 + a_2 \varrho_2 < a_1 d_1 + a_2 d_2 - a_1 a_2,
\quad \forall a \in {\cal N}.
$$
We can then apply Theorem \ref{theo:recpoly} with partition $\Sigma$
to compute the mean number of jobs of each class with a complexity $O(n_1 n_2)$.
We deduce the mean delay $\delta_i$ of class-$i$ jobs
for each $i \in I$ by Little's law:
$$
\delta_i = \frac{L_k}{n_k \lambda_i},
\quad \forall k = 1,2,
\quad \forall i \in I_k.
$$

\section{Stochastic Bounds}
\label{sec:bounds}

\subsection{Monotonicity result}
\label{subsec:monotonicity}

While the property of poly-symmetry is not often satisfied in practice,
except in specific cases like the examples of Sections 3.3 and 3.4,
it can be used to derive stochastic bounds on most systems, as shown below.
The following result will allow us to control the impact of the capacity set on performance.

Given $0 < \epsilon < 1$ and a polymatroid ${\cal C}$ in $\mathbb{R}_+^n$ with rank function $\mu$,
we denote by $(1+\epsilon) {\cal C}$ the polymatroid in $\mathbb{R}_+^n$ with rank function $(1+\epsilon) \mu$
and by $(1-\epsilon) {\cal C}$ the polymatroid in $\mathbb{R}_+^n$ with rank function $(1-\epsilon) \mu$.

\begin{theorem}
  \label{theo:mono}
  Let $0 < \epsilon < 1$.
  Consider two polymatroids $\hat{\cal C}$ and ${\cal C}$ in $\mathbb{R}_+^n$
  such that $\hat{\cal C}$ is a subset of $(1+\epsilon) {\cal C}$
  and a superset of $(1-\epsilon) {\cal C}$.
  Let $\rho$ be an element in the interior of $(1-\epsilon) {\cal C}$
  and denote respectively by $\pi$, $\pi_+$ and $\pi_-$ the steady state distributions
  of the processor-sharing systems with capacity sets
  $\hat{\cal C}$, $(1 + \epsilon) {\cal C}$ and $(1 - \epsilon) {\cal C}$
  under traffic intensity $\rho$.
  Then
  $$
  \frac{\pi_-(0)}{\pi_+(0)} \pi_+(x)
  \le \pi(x)
  \le \frac{\pi_+(0)}{\pi_-(0)} \pi_-(x),
  \quad \forall x \in \mathbb{N}^n.
  $$
  Specifically, for each $i \in I$, we have
  $$
  \frac{\pi_-(0)}{\pi_+(0)} L_{i,+}
  \le L_i
  \le \frac{\pi_+(0)}{\pi_-(0)} L_{i,-},
  $$
  where $L_i$, $L_{i,+}$ and $L_{i,-}$ are the mean number of job at queue $i$
  under distributions $\pi$, $\pi_+$ and $\pi_-$ respectively.
\end{theorem}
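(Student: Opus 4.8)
The plan is to reduce the whole statement to a pointwise comparison of the balance functions of the three systems, and then handle the normalizing constants separately. First I would record the elementary polymatroid fact that the rank function can be recovered from the set itself as $\mu(A) = \max\{\sum_{i \in A} \phi_i : \phi \in {\cal C}\}$, the maximum being attained at the greedy vertex (a standard property; see \cite{F05}). Writing $\hat\mu$ for the rank function of $\hat{\cal C}$, the inclusions $(1-\epsilon){\cal C} \subset \hat{\cal C} \subset (1+\epsilon){\cal C}$ translate directly into the pointwise inequalities $(1-\epsilon)\mu(A) \le \hat\mu(A) \le (1+\epsilon)\mu(A)$ for every $A \subset I$, since a larger set has a larger support value in each direction $\mathbf{1}_A$. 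I would also note that, since $\rho$ lies in the interior of $(1-\epsilon){\cal C}$, it lies in the interior of all three capacity sets, so all three systems are stable and their stationary distributions are well defined.

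Next I would establish the engine of the argument, a monotonicity lemma for balance functions: if two polymatroids have rank functions $\mu_1 \le \mu_2$ pointwise, then their balance functions satisfy $\Phi^{\mu_1} \ge \Phi^{\mu_2}$ pointwise. This follows by induction on $|x| = \sum_{i} x_i$ from the recursion \eqref{eq:recPhi}: the base case $\Phi(0) = 1$ is immediate, and in the inductive step the numerator $\sum_{i \in I(x)} \Phi^{\mu_1}(x-e_i)$ dominates $\sum_{i \in I(x)} \Phi^{\mu_2}(x-e_i)$ while the denominator $\mu_1(I(x))$ is no larger than $\mu_2(I(x))$, so the quotient can only increase. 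Applied to $(1-\epsilon)\mu$, $\hat\mu$ and $(1+\epsilon)\mu$, this yields the chain $\Phi_+(x) \le \hat\Phi(x) \le \Phi_-(x)$ for all $x \in \mathbb{N}^n$, where $\hat\Phi$, $\Phi_+$, $\Phi_-$ denote the balance functions of the three systems.

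It remains to pass from balance functions to stationary distributions, and the bookkeeping with the normalizing constants is where the real care is needed. Using the stationary measure $\pi(x) = \pi(0)\Phi(x)\rho^x$ together with $\pi(0) = 1/Z$ where $Z = \sum_x \Phi(x)\rho^x$, the pointwise ordering of the balance functions sums (against the nonnegative weights $\rho^x$) to $Z_+ \le Z \le Z_-$, equivalently $\pi_-(0) \le \pi(0) \le \pi_+(0)$. For the upper bound I would then write $\pi(x) = \pi(0)\hat\Phi(x)\rho^x \le \pi(0)\Phi_-(x)\rho^x = \frac{\pi(0)}{\pi_-(0)}\pi_-(x) \le \frac{\pi_+(0)}{\pi_-(0)}\pi_-(x)$, using $\hat\Phi \le \Phi_-$ in the first step and $\pi(0) \le \pi_+(0)$ in the last; the lower bound is symmetric, using $\hat\Phi \ge \Phi_+$ and $\pi(0) \ge \pi_-(0)$. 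The bounds on $L_i = \sum_x x_i \pi(x)$ then follow by multiplying the pointwise bounds on $\pi(x)$ by $x_i \ge 0$ and summing over $x$. The main obstacle is precisely this two-sided interplay of the normalizing constants: neither the pointwise balance-function bound nor the partition-function bound alone is enough, and one must combine them in the correct direction so that the stray factor collapses exactly to the stated ratio $\pi_+(0)/\pi_-(0)$.
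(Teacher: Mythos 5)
Your proof is correct and takes essentially the same route as the paper's: the same induction on $|x|$ via the recursion \eqref{eq:recPhi} yielding $\Phi_+(x) \le \hat\Phi(x) \le \Phi_-(x)$, then the same comparison of normalizing constants ($\pi_-(0) \le \pi(0) \le \pi_+(0)$) combined with the pointwise balance-function bounds, and summation for the $L_i$ claims. The only differences are presentational: you package the induction as a general monotonicity lemma and justify the rank-function inequalities $(1-\epsilon)\mu \le \hat\mu \le (1+\epsilon)\mu$ explicitly through the support-function characterization of the rank function, where the paper invokes the set inclusions directly inside the induction step.
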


\begin{proof}
  Denote by $\hat\mu$ and $\mu$ the rank functions of $\hat{\cal C}$ and ${\cal C}$ respectively.
  Let $\Phi$, $\Phi_+$ and $\Phi_-$ denote the balance functions of the resource allocations
  defined by balanced fairness in the capacity sets
  $\hat{\cal C}$, $(1 + \epsilon) {\cal C}$ and $(1 - \epsilon) {\cal C}$ respectively.
  We first prove by induction on $|x|$ that
  $$
  \Phi_+(x)
  \le \Phi(x)
  \le \Phi_-(x),
  \quad \forall x \in \mathbb{N}^n.
  $$
  The property holds for $x = 0$.
  Let $x \in \mathbb{N}^n \setminus \{0\}$
  and assume the inequality is valid for any $y \in \mathbb{N}^n$
  such that $|y| < |x|$.
  Then we have by \eqref{eq:recPhi}:
  $$
  \Phi(x)
  = \frac{\sum_{i \in I(x)} \Phi(x-e_i)}{\hat\mu(I(x))}
  \le \frac{\sum_{i \in I(x)} \Phi_-(x-e_i)}{\hat\mu(I(x))}
  \le \frac{\sum_{i \in I(x)} \Phi_-(x-e_i)}{(1 - \epsilon) \mu(I(x))}
  = \Phi_-(x)
  $$
  where the first inequality holds by the induction assumption
  and the second holds by the inclusion of $(1-\epsilon) {\cal C}$ into $\hat{\cal C}$.
  We prove the other side of the inequality
  by using the inclusion of $\hat{\cal C}$ into $(1+\epsilon) {\cal C}$.
  This completes the proof by induction.

  It follows that
  $$
  \frac1{\pi(0)}
  = \sum_{x \in \mathbb{N}^n} \Phi(x) \rho^x
  \ge \sum_{x \in \mathbb{N}^n} \Phi_+(x) \rho^x
  = \frac1{\pi_+(0)}.
  $$
  Thus for each $x \in \mathbb{N}^n$, we obtain
  $$
  \pi(x)
  = \pi(0) \Phi(x) \rho^x
  \le \pi(0) \Phi_-(x) \rho^x
  \le \pi_+(0) \Phi_-(x) \rho^x
  = \frac{\pi_+(0)}{\pi_-(0)} \pi_-(x).
  $$
  The proof for the other part of the inequality is similar.
  The second inequality about the mean number of jobs follows by summation.
\end{proof}

The following sections illustrate how we can apply this result to the models of
tree data networks and computer clusters.

\subsection{Application to tree data networks}

We first use this result to relax some assumptions of Section \ref{subsec:multisource}.
The flows of each user go through
an individual access line which is dedicated to this user
and an aggregation link shared by all users.
We still consider $K$ groups $I_1,\ldots,I_K$ of users which form a partition of the set $I$.
For each $k = 1,\ldots,K$, the access rates of the users in $I_k$ may be different
but we assume that they are all between $(1 - \epsilon) r_k$ and $(1 + \epsilon) r_k$ for some $\epsilon > 0$.
Similarly, the capacity of the aggregation link is between $(1 - \epsilon) C$ and $(1 + \epsilon) C$.
The corresponding polymatroid capacity set $\hat{\cal C}$ is not poly-symmetric
with respect to partition $\Sigma = (I_k : k = 1,\ldots,K)$ any more
but its rank function $\hat\mu$ satisfies:
$$
(1 - \epsilon) \mu(A)
\le \hat\mu(A)
\le (1 + \epsilon) \mu(A),
\quad \forall A \subset I,
$$
where $\mu$ is the rank function defined by \eqref{eq:mums}.
Denoting by ${\cal C}$ the polymatroid defined by $\mu$,
it follows that $\hat{\cal C}$ is a superset of $(1 - \epsilon) {\cal C}$ and a subset of $(1 + \epsilon) {\cal C}$.
We can thus apply Theorem \ref{theo:mono}.
In the special case where for each $k = 1,\ldots,K$,
all users of $I_k$ have the same traffic intensity $\varrho_k < (1 - \epsilon) r_k$,
with $\sum_{k=1}^K n_k \varrho_k < (1 - \epsilon) C$,
we can use Theorem \ref{theo:recpoly} to compute the bounds.

Specifically, let $\pi_+$ and $\pi_-$ denote the steady state distributions
of the processor-sharing systems with capacity sets $(1 + \epsilon) {\cal C}$ and $(1 - \epsilon) {\cal C}$ respectively
under traffic intensity $\rho$.
For each $k = 1,\ldots,K$ and each $i \in I_k$,
the mean throughput $\gamma_i$ of user $i$
satisfies
$$
\frac{\pi_-(0)}{\pi_+(0)} \gamma_{k,-}
\le \gamma_i
\le \frac{\pi_+(0)}{\pi_-(0)} \gamma_{k,+},
$$
where $\gamma_{k,+}$ and $\gamma_{k,-}$ are the mean throughputs
under distributions $\pi_+$ and $\pi_-$ respectively.
We have
$$
\gamma_{k,\pm} = \frac{\varrho_k}
{1 - \sum\limits_{a \in {\cal N}: a_k < n_k} \binom{n_k-1}{a_k} \prod\limits_{l \neq k} \binom{n_l}{a_l} \pi_\pm(a)},
\quad \forall k = 1,\ldots,K,
$$
and by \eqref{eq:recpipoly}
$$
\pi_\pm(a)
= \frac{\sum_{k=1}^K (n_k - a_k + 1) \varrho_k \pi_\pm(a-e_k)}{(1 \pm \epsilon) h(a) - \sum_{k=1}^K a_k \varrho_k},
\quad \forall a \in {\cal N} \setminus \{0\}.
$$

\subsection{Application to computer cluster with random assignment}
\label{subsec:randomassign}

\paragraph{Random assignment.}
Consider a cluster as described in Section \ref{subsec:modelcluster},
where we denote by $S = \{1,\ldots,m\}$ the set of servers and by $I$ the set of class indices.
Let $K \ge 1$ and consider for simplicity a partition $\Sigma = (I_1,\ldots,I_K)$ of $I$ into $K$ parts of size $n$,
so that the total number of job classes in the cluster is now given by $Kn$.
We can easily generalize the result to $K$ parts of different sizes.
We use the same notation as in Sections \ref{subsec:defpoly} and \ref{subsec:recpoly}:
for each $A \subset I$,
$a = |A|_\Sigma$ denotes the $K$-dimensional vector
whose $k$-th component is $a_k = |A \cap I_k|$, the size of the $k$-th part of $A$ in partition $\Sigma$,
for each $k = 1,\ldots,K$;
the set of these vectors is denoted by ${\cal N} = \{0,1,\ldots,n\}^K$.

We now introduce a random assignment of the servers to the job classes,
which is described by an assignment graph with random edges.
Each realization of this random assignment defines a polymatroid capacity set as in Section \ref{subsec:capacityset}.
Hence, once this initial assignment is settled, we can apply balanced fairness over the associated capacity set as described in Section \ref{subsec:bf}.
Note that our assignment is static in the sense that the assignment graph does not change with time.
For a given realization, we can thus observe the evolution of the cluster under stochastic arrivals
and compute the resulting performance metrics as we did in Section \ref{subsec:perf}.

The servers are randomly assigned to the job classes as follows.
Let $d = (d_k : k = 1,\ldots,K)$
be a vector of positive integers.
For any $k = 1,\ldots,K$ and $i \in I_k$,
the set ${\bf S}_i$ of servers that can process class-$i$ jobs
is chosen uniformly and independently at random
among the subsets of $S = \{1,\ldots,m\}$ of cardinality $d_k$.
As in Section \ref{subsec:modelcluster},
the random assignment is described
by the family $({\bf S}_i : i \in I)$
which defines a random bipartite graph
$$
{\bf G} = \left( I, S, \bigcup_{i \in I} ( \{i\} \times {\bf S}_i ) \right)
$$
with deterministic sets of vertices $I$ and $S$
and a random set of edges.
Each realization
$(S_i : i \in I)$ of the random assignment
defines a polymatroid capacity set
with a rank function given by \eqref{eq:mucluster}.
This allows us to define a random rank function associated with the random assignment by
$$
{\bf M}(A)
= \sum_{s \in S} \mu_s 1_{s \in \bigcup_{i \in A} {\bf S}_i},
\quad \forall A \subset I.
$$
Now let $\mu$ denote the corresponding mean rank function:
$$
\mu(A)
= \mathbb{E}[{\bf M}(A)],
\quad \forall A \subset I.
$$
The following lemma proves that the polymatroid
defined by $\mu$ is poly-symmetric with respect to $\Sigma$.

\begin{lemma}
  \label{lem:meanrank}
  For each $A \subset I$, we have
  $\mu(A) = \xi m p_a$
  with
  $\xi = \frac1m \sum_{s \in S} \mu_s$,
  $a = |A|_\Sigma$
  and
  $$
  p_a = 1 - \prod_{k=1}^K \left( 1 - \frac{d_k}m \right)^{a_k}.
  $$
\end{lemma}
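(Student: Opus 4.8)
The plan is to compute the mean rank function directly by linearity of expectation, reducing the problem to a per-server coverage probability. Starting from the definition of ${\bf M}(A)$ and taking expectations, I would first obtain
$$
\mu(A) = \mathbb{E}[{\bf M}(A)] = \sum_{s \in S} \mu_s \, \mathbb{P}\Big\{ s \in \bigcup_{i \in A} {\bf S}_i \Big\}.
$$
The guiding observation is that this coverage probability will turn out not to depend on $s$, so that it can be pulled out of the sum and combined with $\sum_{s \in S} \mu_s = m \xi$.

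Next I would rewrite the coverage probability through its complement: a server $s$ fails to be covered by the classes in $A$ exactly when $s \notin {\bf S}_i$ for every $i \in A$. Since the sets $({\bf S}_i : i \in I)$ are chosen independently, the probability of this intersection factorizes as $\prod_{i \in A} \mathbb{P}\{ s \notin {\bf S}_i \}$. The one genuine computation is then the single-class probability: for $i \in I_k$, the set ${\bf S}_i$ is a uniformly random $d_k$-subset of $S$, so by a direct counting argument $\mathbb{P}\{ s \notin {\bf S}_i \} = \binom{m-1}{d_k} / \binom{m}{d_k} = 1 - d_k/m$, independently of the particular server $s$. Grouping the factors according to the part each index belongs to gives $\prod_{i \in A}\mathbb{P}\{s\notin {\bf S}_i\} = \prod_{k=1}^K (1 - d_k/m)^{a_k}$ with $a_k = |A \cap I_k|$, and hence the coverage probability equals exactly $p_a$.

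Combining the two steps, since $p_a$ is independent of $s$, I would conclude that $\mu(A) = p_a \sum_{s \in S}\mu_s = \xi m p_a$, as claimed. I do not expect a real obstacle here: the only points requiring care are justifying the factorization from the independence of the random server sets, and verifying that the avoidance probability $1 - d_k/m$ does not depend on the chosen server $s$, which is precisely what makes the sum over $s$ collapse to $m\xi$ and yields the stated product form for $p_a$.
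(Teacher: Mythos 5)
Your proof is correct and follows essentially the same route as the paper's: linearity of expectation reduces $\mu(A)$ to a per-server coverage probability, which is computed via the complement, the independence of the assignments, and the counting identity $\binom{m-1}{d_k}/\binom{m}{d_k} = 1 - d_k/m$. Your write-up simply makes explicit the factorization and the server-independence that the paper's shorter argument leaves implicit.
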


\begin{proof}
  Let $a \in {\cal N}$ and consider any set $A \subset I$
  with $|A|_\Sigma = a$.
  We just need to observe that
  $$
  \mu(A)
  = \sum_{s \in S} \mu_s
  \mathbb{P} \left\{ s \in \bigcup_{i \in A} {\bf S}_i \right\}.
  $$
  For each $k = 1,\ldots,K$ such that $a_k > 0$,
  the probability that a server can not serve a specific class of $A \cap I_k$
  is $\binom{m-1}{d_k} / \binom{m}{d_k} = 1 - \frac{d_k}m$.
  Since the assignments of the classes are independent,
  it follows that the probability that this server can serve at least one class in $A$ is given by $p_a$.
\end{proof}

Let $\rho = (\rho_i : i \in I)$ be a vector of traffic intensities.
If $G$ is a realization of the random assignment graph ${\bf G}$
such that $\rho$ is in the interior of the polymatroid capacity set defined by $G$,
then the corresponding processor-sharing system is stable under balanced fairness
and we can study its steady-state behavior.
We thus denote by ${\bf X} = ({\bf X}_i : i \in I)$ the random vector distributed according to the stationary distribution of the system state
when $\rho$ is in the interior of ${\bf G}$,
and for completeness we let ${\bf X} = 0$ otherwise.
For each realization $G$ of the random assignment such that $\rho$ is in the interior of the corresponding capacity set, we let
$$
L_i(G) = \mathbb{E}\left[ {\bf X}_i | {\bf G} = G \right],
\quad \forall i \in I,
$$
which is simply the mean number of jobs of each class in the corresponding processor-sharing system under balanced fairness, as defined in Section \ref{subsec:perf}.
For each realization $G$ which is not stabilized by $\rho$, we let $L_i(G) = +\infty$.

\paragraph{Asymptotic poly-symmetry.}

We consider a sequence of computer clusters with random assignment as defined in the previous section.
Let $K \ge 1$ and $b > 0$.

For each $n \ge 1$,
the $n$-th random cluster of the sequence contains
$m^{(n)} = \lceil bn \rceil$ servers and $Kn$ job classes.
We denote the set of servers by $S^{(n)} = \{1,\ldots,m^{(n)}\}$ and
the set of job classes by $I^{(n)} = \{1,\ldots,Kn\}$.
The service rate of server $s$ is $\mu_s^{(n)}$ for each $s \in S^{(n)}$.
For simplicity, we consider a partition
$\Sigma^{(n)} = (I_k^{(n)} : k = 1,\ldots,K)$
of $I^{(n)}$ into $K$ parts of size $n$.
We can generalize the result to $K$ parts of different sizes
as long as the size of each part is linear in $n$.
For each $k = 1,\ldots,K$ and $i \in I_k^{(n)}$,
the set ${\bf S}_i^{(n)}$ of servers that can process class-$i$ jobs
is chosen uniformly and independently at random
among the subsets of $S^{(n)}$ of cardinality $d_k^{(n)}$.
Let ${\bf G}^{(n)}$ denote the random graph defined by this random assignment,
${\bf M}^{(n)}$ the corresponding random rank function
and $\mu^{(n)}$ its expectation.

By Lemma \ref{lem:meanrank},
for any $n \ge 1$ and $a \in {\cal N}^{(n)} = \{0,\ldots,n\}^K$,
we have
$$
\mu^{(n)}(A)
= \mathbb{E}\left[ {\bf M}^{(n)}(A) \right]
= \xi^{(n)} m^{(n)} p_a^{(n)}
\quad \text{with} \quad
p_a^{(n)} = 1 - \prod_{k=1}^K \left( 1 - \frac{d_k^{(n)}}m \right)^{a_k}
$$
for all set $A \subset I^{(n)}$ with $|A|_{\Sigma^{(n)}} = a$,
where $\xi^{(n)} = \frac1{m^{(n)}} \sum_{s \in S^{(n)}} \mu_s^{(n)}$ is the mean server capacity.
Theorem \ref{theo:concentration} below shows that,
under the following two assumptions on the server capacities and the degrees of parallelism,
the probability that the random rank function is uniformly close to its mean
is $1 - o\left( \frac1n \right)$.
The proof is given in Appendix \ref{app:concentration}.

\begin{ass}
  \label{ass:servers}
  For each $n \ge 1$,
  $S^{(n)}$ is partitioned into a constant number of groups.
  Each group contains $\Omega(n)$ servers which have the same capacity.
\end{ass}

\begin{ass}
  \label{ass:degrees}
  For each $k = 1,\ldots,K$, the sequence $\left( d_k^{(n)} : n \ge 1 \right)$ satisfies
  $d_k^{(n)} = \omega(\log n)$.
\end{ass}

\begin{theorem}
  \label{theo:concentration}
  Let $0 < \epsilon < 1$.
  Under Assumptions \ref{ass:servers} and \ref{ass:degrees},
  there exists a sequence $(g_n : n \ge 1)$ such that
  $g_n = \omega(\log n)$ and for any $n \ge 1$,
  $$
  \mathbb{P} \left\{ \exists A \subset I^{(n)}
    \text{ s.t. } {\bf M}^{(n)}(A) \le (1 - \epsilon) \mu^{(n)}(A)
  \right\}
  \le e^{-g_n}
  $$
  and
  $$
  \mathbb{P} \left\{ \exists A \subset I^{(n)}
    \text{ s.t. } {\bf M}^{(n)}(A) \ge (1 + \epsilon) \mu^{(n)}(A)
  \right\}
  \le e^{-g_n}.
  $$
\end{theorem}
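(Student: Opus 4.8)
The plan is to prove a uniform concentration bound for the random rank function
${\bf M}^{(n)}$ around its mean $\mu^{(n)}$. I would first fix a set
$A \subset I^{(n)}$ with $|A|_{\Sigma^{(n)}} = a$ and observe that
${\bf M}^{(n)}(A) = \sum_{s \in S^{(n)}} \mu_s^{(n)} 1_{s \in \bigcup_{i \in A} {\bf S}_i^{(n)}}$
is a sum of \emph{independent} indicator-weighted terms, one per server, because the random
server sets $({\bf S}_i^{(n)} : i \in I^{(n)})$ are chosen independently and the event that server
$s$ serves some class of $A$ is determined by which classes chose $s$. By Lemma
\ref{lem:meanrank} its mean is $\xi^{(n)} m^{(n)} p_a^{(n)}$. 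The natural tool is a
multiplicative (relative) Chernoff bound: for the lower-tail event
${\bf M}^{(n)}(A) \le (1-\epsilon)\mu^{(n)}(A)$ and the upper-tail event
${\bf M}^{(n)}(A) \ge (1+\epsilon)\mu^{(n)}(A)$, a Chernoff--Hoeffding argument on these bounded
independent summands yields a bound of the form $\exp(-c(\epsilon)\, \mu^{(n)}(A)/\bar\mu)$,
where $\bar\mu$ bounds the per-server capacities. The factor $\xi^{(n)} m^{(n)}$ in the mean
grows linearly in $n$, so the exponent for a \emph{single} set $A$ is $\Omega(n p_a^{(n)})$.

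Next I would handle the quantifier ``$\exists A$'' by a union bound, which is the crux of the
argument. The number of sets $A$ with a given cardinality vector $a$ is
$\prod_{k=1}^K \binom{n}{a_k} \le 2^{Kn}$, so the total number of subsets of $I^{(n)}$ is
$2^{Kn}$, giving a union-bound penalty of $e^{(K\log 2)\,n}$. Thus I need the per-set exponent
$c(\epsilon)\,\mu^{(n)}(A)$ to beat this linear-in-$n$ penalty with room to spare, namely I must
show $c(\epsilon)\,\xi^{(n)} m^{(n)} p_a^{(n)} - (K\log 2)\,n = \omega(\log n)$ uniformly over $a$.
The main obstacle, and the reason Assumption \ref{ass:degrees} is needed, is that $p_a^{(n)}$
can be \emph{small} when $a$ is small: for $a = e_k$ (a single active class) we have
$p_{e_k}^{(n)} = d_k^{(n)}/m^{(n)}$, so $\mu^{(n)}(A) = \xi^{(n)} d_k^{(n)}$, which is only
$\omega(\log n)$ rather than $\Omega(n)$. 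Hence the naive union bound over all $2^{Kn}$ sets is
too lossy for these small sets.

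To repair this I would \emph{stratify the union bound by the cardinality vector} $a$, balancing
the number of sets against the concentration exponent in each stratum separately. For sets with
small $\sum_k a_k$ the combinatorial count $\prod_k \binom{n}{a_k}$ is itself small
(at most $n^{\sum_k a_k}$, i.e. $e^{O(\sum_k a_k \log n)}$), so the union penalty is only
$O(\sum_k a_k \log n)$; meanwhile the per-set exponent is at least
$c(\epsilon)\,\xi^{(n)} \min_k d_k^{(n)} \cdot \max_k a_k$, which under Assumption
\ref{ass:degrees} is $\omega(\sum_k a_k \log n)$ since $d_k^{(n)}/\log n \to \infty$. For sets
with large $\sum_k a_k$, $p_a^{(n)}$ is bounded below by a positive constant (it increases
componentwise in $a$ and is already $\Theta(1)$ once any $a_k$ is a constant fraction of $n$),
so $\mu^{(n)}(A) = \Omega(n)$ dominates the $O(n)$ union penalty for $\epsilon$ fixed and $b$
fixed, again with an $\omega(\log n)$ surplus. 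Combining the two regimes, I would extract a
single sequence $g_n = \omega(\log n)$ that lower-bounds the surplus exponent uniformly over all
$a \in {\cal N}^{(n)}$ and over both tails, and conclude both displayed probability bounds. I
would use Assumption \ref{ass:servers} to control $\xi^{(n)}$ and the spread of the $\mu_s^{(n)}$
so that the bounded-summand Chernoff constant $c(\epsilon)$ does not degrade with $n$; the
delicate point throughout is ensuring the estimates are \emph{uniform in $a$}, which is exactly
what the stratified union bound delivers.
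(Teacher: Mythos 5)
Your skeleton (per-set Chernoff bound, then a union bound stratified by the cardinality vector $a$, using Assumption \ref{ass:degrees} to beat the $n^{\sum_k a_k}$ count) matches the paper's proof in its first regime, but two steps do not hold as written. First, the independence claim is false: for a fixed class $i$, the indicators $1_{s \in {\bf S}_i^{(n)}}$ across servers $s$ are \emph{not} independent, because ${\bf S}_i^{(n)}$ is a uniformly random subset of \emph{fixed} cardinality $d_k^{(n)}$ (sampling without replacement); independence holds across classes, not across servers. The paper repairs exactly this point by invoking \cite{DPR96}: the per-server indicators are negatively associated, and Chernoff--Hoeffding bounds remain valid for negatively associated variables. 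This flaw is fixable with the right citation, so it is not the main problem.

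The genuine gap is in your large-cardinality regime. There the multiplicative Chernoff exponent you rely on is capped by $c(\epsilon)\,\xi^{(n)} m^{(n)} p_a^{(n)}/\bar\mu \le c(\epsilon)\,\xi^{(n)} \lceil bn \rceil/\bar\mu$ with $c(\epsilon) = \Theta(\epsilon^2)$, since $p_a^{(n)} \le 1$, while the number of sets in a stratum with $a_k = \Theta(n)$ is of order $\binom{n}{a_1}\cdots\binom{n}{a_K}$, whose logarithm is $\Theta(n)$ with constant up to $K \log 2$ (and your bound $n^{\sum_k a_k}$ is even larger, $e^{\Theta(n \log n)}$). The theorem must hold for \emph{every} fixed $\epsilon \in (0,1)$, and since $c(\epsilon) \to 0$ as $\epsilon \to 0$, the claimed surplus $c(\epsilon)\Theta(n) - (K\log 2)\,n$ is negative for small $\epsilon$: linear-versus-linear with the wrong constants, so no $\omega(\log n)$ surplus exists. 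This is precisely why the paper does not use the mean-capped bound \eqref{eq:smallchernoff} in this regime, but the relative-entropy form \eqref{eq:bigchernoff} combined with Lemma \ref{lem:bigminor}, which yields the per-set exponent $-\delta m^{(n)} + \epsilon \sum_k a_k d_k^{(n)}$. The point is that this exponent keeps growing with $\sum_k a_k d_k^{(n)}$ even after $m^{(n)} p_a^{(n)}$ saturates at $\Theta(n)$ --- heuristically, when $p_a^{(n)}$ is close to $1$, the lower-tail event forces a fraction $\epsilon$ of servers to be uncovered, which has probability roughly $\left( 1 - p_a^{(n)} \right)^{\epsilon m^{(n)}} = \exp\left( -\Theta\left( \epsilon \sum_k a_k d_k^{(n)} \right) \right)$ --- and since $d_k^{(n)} = \omega(\log n)$, this beats the $e^{\sum_k a_k \log n}$ count uniformly, with $\omega(\log n)$ to spare. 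Without this sharpened tail bound (or an equivalent idea), the large-cardinality strata cannot be controlled, and your proof does not go through. A minor further difference: the paper handles heterogeneous server capacities not by a bounded-summand Chernoff argument but by splitting the servers into the $\Theta(1)$ equal-capacity groups of Assumption \ref{ass:servers}, proving concentration per group, and taking a final union bound.
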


\noindent
Corollary \ref{coro:concentration}
follows from Theorem \ref{theo:concentration}.
For any $n \ge 1$, let ${\cal C}^{(n)}$
denote the polymatroid defined by the rank function $\mu^{(n)}$.
${\cal C}^{(n)}$ is poly-symmetric with respect to the partition $\Sigma^{(n)}$.

\begin{coro}
  \label{coro:concentration}
  Let $0 < \epsilon < 1$.
  Under Assumptions \ref{ass:servers} and \ref{ass:degrees},
  the random capacity set resulting from the random assignment
  is a subset of $(1+\epsilon) {\cal C}^{(n)}$
  and a superset of $(1-\epsilon) {\cal C}^{(n)}$
  with probability $1 - o\left( \frac1n \right)$.
\end{coro}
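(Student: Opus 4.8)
The plan is to reduce the two geometric containment assertions to pointwise comparisons of rank functions, after which Theorem~\ref{theo:concentration} supplies the required probability bound almost verbatim.

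First I would record the elementary monotonicity of the polymatroid construction: if $f$ and $g$ are two non-negative set functions on $I^{(n)}$ with $f(A) \le g(A)$ for every $A \subset I^{(n)}$, then the polymatroid defined by $f$ is contained in the one defined by $g$, since any non-negative $\phi$ meeting the constraints $\sum_{i\in A}\phi_i \le f(A)$ automatically meets $\sum_{i\in A}\phi_i \le g(A)$. Applying this with $f = {\bf M}^{(n)}$ and $g = (1+\epsilon)\mu^{(n)}$ shows that the inequality ${\bf M}^{(n)}(A) \le (1+\epsilon)\mu^{(n)}(A)$ for all $A$ forces the random capacity set to be a subset of $(1+\epsilon){\cal C}^{(n)}$; applying it with $f = (1-\epsilon)\mu^{(n)}$ and $g = {\bf M}^{(n)}$ shows that $(1-\epsilon)\mu^{(n)}(A) \le {\bf M}^{(n)}(A)$ for all $A$ forces the random capacity set to be a superset of $(1-\epsilon){\cal C}^{(n)}$. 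The one point needing care here is the matching of directions: the smaller rank function always yields the smaller polymatroid, so the two inclusions correspond to opposite inequalities on ${\bf M}^{(n)}$.

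Second, I would bound the probability of the complementary (bad) event. The event that at least one of the two inclusions fails is contained in the union
$$
\left\{ \exists A \subset I^{(n)} : {\bf M}^{(n)}(A) \ge (1+\epsilon)\mu^{(n)}(A) \right\}
\cup
\left\{ \exists A \subset I^{(n)} : {\bf M}^{(n)}(A) \le (1-\epsilon)\mu^{(n)}(A) \right\},
$$
the strict deviation ${\bf M}^{(n)}(A) < (1-\epsilon)\mu^{(n)}(A)$ that actually negates the superset condition being safely absorbed into the non-strict event above. A union bound together with the two tail estimates of Theorem~\ref{theo:concentration} bounds the probability of this event by $2e^{-g_n}$, so both inclusions hold simultaneously with probability at least $1 - 2e^{-g_n}$.

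Finally, I would translate this into the stated asymptotic order. Since $g_n = \omega(\log n)$, we have $g_n/\log n \to \infty$, hence $n \cdot 2e^{-g_n} = 2e^{\log n - g_n} \to 0$, i.e.\ $2e^{-g_n} = o\!\left(\frac1n\right)$, which yields the claimed probability $1 - o\!\left(\frac1n\right)$. I do not expect any genuine obstacle here: the whole content is carried by Theorem~\ref{theo:concentration}, and the corollary is its packaging through the rank-function monotonicity of polymatroid inclusion and a union bound.
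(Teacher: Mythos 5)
Your proposal is correct and is precisely the argument the paper intends: the paper states that the corollary ``follows from Theorem \ref{theo:concentration}'', leaving implicit exactly the three routine steps you spell out --- pointwise rank-function domination implies polymatroid inclusion, a union bound giving failure probability at most $2e^{-g_n}$, and $g_n = \omega(\log n)$ turning this into $o\left(\frac1n\right)$. No gaps; the directions of the inequalities and the strict-versus-non-strict point are handled correctly.
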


\paragraph{Performance metrics.}

For each $n \ge 1$, we consider a vector $\varrho^{(n)} \in \mathbb{R}_+^K$ of traffic intensities per part
which stabilizes the processor-sharing system with $n$ queues and capacity set $(1-\epsilon) {\cal C}^{(n)}$ under balanced fairness, that is,
$$
\sum_{k=1}^K a_k \varrho_k^{(n)}
< (1 - \epsilon) h^{(n)} (a),
\quad \forall a \in {\cal N}^{(n)},
$$
where $h^{(n)}$ is the cardinality rank function of the mean capacity set with respect to partition $\Sigma^{(n)}$:
$$
h^{(n)} (a)
= \xi^{(n)} m^{(n)} p_a^{(n)},
\quad \forall a \in {\cal N}^{(n)}.
$$
Let ${\bf X}^{(n)}$ denote
the state in the $n$-th randomized computer cluster
when we allocate the resources according to balanced fairness.
Given a realization $G^{(n)}$ which
is stabilized by $\varrho^{(n)}$,
the mean numbers of jobs per class are given by
$$
L_i\left( G^{(n)} \right)
= \mathbb{E} \left[
  {\bf X}_i^{(n)} \big| {\bf G}^{(n)} = G^{(n)}
\right],
\quad \forall i \in I^{(n)}.
$$
For each realization $G^{(n)}$ which is not stabilized by $\varrho^{(n)}$,
we let $L_i\left( G^{(n)} \right) = +\infty$.
This allows us to define the random variables
$$
{\bf L}_i^{(n)}
= L_i\left( {\bf G}^{(n)} \right),
\quad \forall i \in I.
$$

Combining Theorems \ref{theo:mono} and \ref{theo:concentration} yields the following result.

\begin{theorem}
  \label{theo:boundcluster}
  Let $0 < \epsilon < 1$.
  For any $n \ge 1$,
  denote by $\pi_+^{(n)}$ and $\pi_-^{(n)}$ the stationary distributions of
  the processor-sharing systems with $n$ queues and capacity sets $(1 + \epsilon) {\cal C}^{(n)}$ and $(1 - \epsilon) {\cal C}^{(n)}$ respectively,
  when the traffic intensity of the classes in $I_k^{(n)}$ is $\varrho_k^{(n)}$, for any $k = 1,\ldots,K$.
  Let $L_{k,+}^{(n)}$ and $L_{k,-}^{(n)}$ denote the corresponding mean number of jobs per queue in part $k$, for each $k = 1,\ldots,K$.

  Under Assumptions \ref{ass:servers} and \ref{ass:degrees}, we have
  $$
  \mathbb{P} \left\{
    \frac{\pi_-^{(n)}(0)}{\pi_+^{(n)}(0)} \frac{L_{k,+}^{(n)}}n
    \le {\bf L}_i^{(n)}
    \le \frac{\pi_+^{(n)}(0)}{\pi_-^{(n)}(0)} \frac{L_{k,-}^{(n)}}n,
    \quad \forall k = 1,\ldots,K,
    \quad \forall i \in I_k
  \right\}
  = 1 - o\left( \frac1n \right).
  $$
\end{theorem}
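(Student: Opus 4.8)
The plan is to treat Theorem \ref{theo:boundcluster} as a realization-wise application of the deterministic monotonicity bound (Theorem \ref{theo:mono}), with all of the probabilistic content supplied by the concentration estimate (Theorem \ref{theo:concentration}), and to use poly-symmetry to translate the resulting per-queue bounds into the per-part form stated.

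First I would isolate the good event
$$
E_n = \left\{ (1-\epsilon)\mu^{(n)}(A) \le {\bf M}^{(n)}(A) \le (1+\epsilon)\mu^{(n)}(A), \quad \forall A \subset I^{(n)} \right\}.
$$
A union bound over the two estimates of Theorem \ref{theo:concentration} gives $\mathbb{P}(E_n^c) \le 2 e^{-g_n}$, and since $g_n = \omega(\log n)$ we have $e^{-g_n} = o(1/n)$, so $\mathbb{P}(E_n) = 1 - o(1/n)$. Comparing the defining inequalities of the two polymatroids shows that on $E_n$ the realized capacity set $\hat{\cal C}^{(n)}$ satisfies $(1-\epsilon){\cal C}^{(n)} \subseteq \hat{\cal C}^{(n)} \subseteq (1+\epsilon){\cal C}^{(n)}$; this is exactly Corollary \ref{coro:concentration}.

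Next, fixing a realization $G^{(n)}$ in $E_n$, I would apply Theorem \ref{theo:mono} with $\hat{\cal C} = \hat{\cal C}^{(n)}$, ${\cal C} = {\cal C}^{(n)}$, and the traffic vector $\rho$ defined by $\rho_i = \varrho_k^{(n)}$ for $i \in I_k^{(n)}$. The interior hypothesis of Theorem \ref{theo:mono} holds because the stability assumption $\sum_{k=1}^K a_k \varrho_k^{(n)} < (1-\epsilon) h^{(n)}(a)$ for all $a \in {\cal N}^{(n)}$ is precisely the statement that $\rho$ lies in the interior of the poly-symmetric polymatroid $(1-\epsilon){\cal C}^{(n)}$, whose constraints depend on $A$ only through $|A|_{\Sigma^{(n)}}$. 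In particular $\rho$ is interior to $\hat{\cal C}^{(n)}$, the realization is stable, and ${\bf L}_i^{(n)} = L_i(G^{(n)})$ is finite. Theorem \ref{theo:mono} then yields, for every $i \in I^{(n)}$,
$$
\frac{\pi_-^{(n)}(0)}{\pi_+^{(n)}(0)} L_{i,+}^{(n)} \le {\bf L}_i^{(n)} \le \frac{\pi_+^{(n)}(0)}{\pi_-^{(n)}(0)} L_{i,-}^{(n)},
$$
where $L_{i,\pm}^{(n)}$ is the mean number of jobs at queue $i$ under $\pi_\pm^{(n)}$. Since $(1\pm\epsilon){\cal C}^{(n)}$ is poly-symmetric with respect to $\Sigma^{(n)}$ and all classes of $I_k^{(n)}$ carry the common intensity $\varrho_k^{(n)}$, the queues in each part are exchangeable under $\pi_\pm^{(n)}$, so $L_{i,\pm}^{(n)} = L_{k,\pm}^{(n)}/n$ for $i \in I_k^{(n)}$ (the reduction recorded in Section \ref{subsec:recpoly}). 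Substituting gives exactly the inequalities of the statement; as they hold on $E_n$ and $\mathbb{P}(E_n) = 1 - o(1/n)$, the probability in the theorem follows.

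I expect the main difficulty to be organizational rather than conceptual: the delicate point is to keep the probabilistic reasoning confined to the concentration step, so that Theorem \ref{theo:mono} is invoked deterministically on each realization in $E_n$, and then to justify cleanly the exchangeability reduction $L_{i,\pm}^{(n)} = L_{k,\pm}^{(n)}/n$ from the poly-symmetry of the bounding systems. Verifying the interior and stability hypothesis on the good event, and converting $g_n = \omega(\log n)$ into $e^{-g_n} = o(1/n)$, are routine.
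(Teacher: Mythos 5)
Your proposal is correct and follows exactly the route the paper intends: the paper's entire ``proof'' of Theorem \ref{theo:boundcluster} is the single remark that it follows by combining Theorem \ref{theo:mono} with Theorem \ref{theo:concentration}, which is precisely your argument (good event from concentration plus union bound, realization-wise application of the monotonicity bound, and the poly-symmetric reduction $L_{i,\pm}^{(n)} = L_{k,\pm}^{(n)}/n$). Your write-up is in fact more detailed than the paper's, notably in verifying the interior/stability hypothesis and the $e^{-g_n} = o(1/n)$ conversion.
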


For each $n \ge 1$,
Theorem \ref{theo:recpoly} gives formulas to compute
$\pi_\pm^{(n)}$ and $L_{k,\pm}^{(n)}$ for each $k = 1,\ldots,K$
with a complexity $O\left( n^K \right)$.
Using Little's law, we can deduce bounds on the mean delay per class.

\begin{figure}
  \centering
  \includegraphics[scale=1]{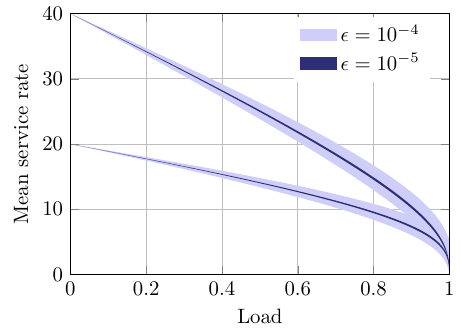}
  \caption{Bounds obtained with $m = 10,000$, $n = 1,000$, $d_1 = 20$, $d_2 = 40$}
  \label{fig:boundscluster}
\end{figure}

\begin{num}
  We omit writing the exponents for brevity.
  Consider a randomized cluster with $m = 10,000$ servers with unit capacity.
  The set of classes is partitioned into two parts $I_1$ and $I_2$ with $n = 1,000$ classes each.
  The classes of $I_1$ have a degree $d_1 = 20$ and the ones of $I_2$ have a degree $d_2 = 40$.
  All jobs have a mean size $1$ and the arrival rates are proportional to the degrees.
  The traffic intensity of any class in $I_2$ is thus twice that of any class in $I_1$.
  Let $\varrho \in \mathbb{R}_+^2$ so that
  $$
  \varrho_1 = \frac{d_1}{d_1 + d_2} \frac{h(n, n)}n
  \quad \text{and} \quad
  \varrho_2 = \frac{d_2}{d_1 + d_2} \frac{h(n, n)}n,
  $$
  where $h$ is the cardinality rank function of the mean capacity set with respect to partition $\Sigma = (I_1, I_2)$:
  $$
  h(a)
  = m \left\{
    1 - \left( 1 - \frac{d_1}m \right)^{a_1} \left( 1 - \frac{d_2}m \right)^{a_2}
  \right\},
  \quad \forall a \in \{0,1,\ldots,n\}^2.
  $$
  We can prove that the vector of traffic intensities $\rho \in \mathbb{R}_+^{2n}$
  with $\rho_i = \varrho_1$ for all $i \in I_1$
  and $\rho_i = \varrho_2$ for all $i \in I_2$
  is on the boundary of the mean capacity set.

  Let $0 < \epsilon < 1$.
  For each $\alpha \in (0,1 - \epsilon)$, $\alpha \varrho$ stabilizes
  the processor-sharing systems with capacity sets $(1 + \epsilon) {\cal C}$ and $(1 - \epsilon) {\cal C}$.
  The bounds on the mean delay that follow from Theorem \ref{theo:boundcluster} by applying Little's law are given by
  $$
  \frac{1 + \epsilon}\alpha \frac{\pi_-(0)}{\pi_+(0)} \frac{L_{k,+}}{n \varrho_k}
  \quad \text{and} \quad
  \frac{1 - \epsilon}\alpha \frac{\pi_+(0)}{\pi_-(0)} \frac{L_{k,-}}{n \varrho_k},
  $$
  where $\pi_\pm(0)$ and $L_{k,\pm}$ are computed
  with the recursion expressions of Theorem \ref{theo:recpoly} as follows.
  For each $a \in \{0,1,\ldots,n\}^2$,
  $$
  \pi_\pm(a)
  = \frac
  {(n - a_1 + 1) \varrho_1 \pi_\pm(a - e_1) + (n - a_2 + 1) \varrho_2 \pi_\pm(a - e_2)}
  {\frac{1 \pm \epsilon}\alpha h(a) - a_1 \varrho_1 - a_2 \varrho_2}.
  $$
  For any $k = 1,2$ and $a \in {\cal N}$, we have $L_{k,\pm}(a) = 0$ if $a_k = 0$, and otherwise
  \begin{multline*}
    \pi_\pm (a) L_{k,\pm} (a)
    = \frac1{\frac{1 \pm \epsilon}\alpha h(a) - a_1 \varrho_1 - a_2 \varrho_2}
    \Bigg\{
      a_k \varrho_k \pi_\pm(a) + (n - a_k + 1) \varrho_k \pi_\pm(a-e_k) \\
      + \sum_{l=1}^2 (n - a_l + 1) \varrho_l \pi_\pm(a-e_l) L_{k,\pm}(a-e_l)
    \Bigg\}.
  \end{multline*}
  Figure \ref{fig:boundscluster} gives the bounds obtained as a function of $\alpha \in (0, 1 - \epsilon)$,
  for different values of $\epsilon$.
  For simplicity, we draw the mean service rate per job,
  which is simply the inverse of the mean delay since all jobs have a unit mean size.
\end{num}

\section{Conclusion}
\label{sec:ccl}

In this paper, we consider processor-sharing systems and
introduce a poly-symmetry criterion
on the structure of their capacity set
which ensures that the performance metrics can be computed
with a complexity which is polynomial in the number of queues
if the traffic intensities per queue are adjusted accordingly.
We showed that these formulas can also be used
to bound the performance of a system when its capacity set
is nearly poly-symmetric.
We applied these results to
tree data networks and computer clusters. 

In future works, we would like to generalize the notion of poly-symmetry
to relax some symmetry assumptions imposed in this paper
while keeping a reasonable time complexity for the calculation of the performance metrics.
We also believe there is further scope of enhancing the stochastic bounds by
expanding their scope as well as obtaining tighter bounds in some specific scaling regimes.

\appendix
\normalsize
\section*{Appendix}

\section{Proof of Theorem \ref{theo:recpoly}}
\label{app:recpoly}

\subsection{Recursion \eqref{eq:recpipoly}}

Let $a \in {\cal N} \setminus \{0\}$.
By \eqref{eq:recpisets}, we have
\begin{align*}
  \pi(a)
  = \sum_{\substack{A \subset I \\ |A|_\Sigma = a}} \pi(A)
  = \sum_{\substack{A \subset I \\ |A|_\Sigma = a}}
  \frac{\sum_{i \in A} \rho_i \pi(A \setminus \{i\})}{\mu(A) - \sum_{i \in A} \rho_i}.
\end{align*}
The regularity assumptions ensure that
$\mu(A) - \sum_{i \in A} \rho_i = h(a) - \sum_{k=1}^K a_k \varrho_k$
for any $A \subset I$ with $|A|_\Sigma = a$.
Thus we obtain
\begin{align*}
  \left( h(a) - \sum_{k=1}^K a_k \varrho_k \right) \pi(a)
  = \sum_{\substack{A \subset I \\ |A|_\Sigma = a}} \sum_{i \in A} \rho_i \pi(A \setminus \{i\})
  = \sum_{k=1}^K \varrho_k \sum_{i \in I_k}
  \sum_{\substack{A \subset I, i \in A \\ |A|_\Sigma = a}} \pi(A \setminus \{i\}).
\end{align*}
For any $k = 1,\ldots,K$ and any $i \in I_k$, we do the substitution
$$
\sum_{\substack{A \subset I, i \in A \\ |A|_\Sigma = a}} \pi(A \setminus \{i\})
= \sum_{\substack{B \subset I \setminus \{i\} \\ |B|_\Sigma = a-e_k}} \pi(B),
$$
and thus we obtain for any $k = 1,\ldots,K$,
\begin{align}
  \label{eq:firsterm}
  \sum_{i \in I_k} \sum_{\substack{A \subset I, i \in A \\ |A|_\Sigma = a}} \pi(A \setminus \{i\})
  = \sum_{\substack{B \subset I \\ |B|_\Sigma = a-e_k}}
  \sum_{i \in I_k \setminus (B \cap I_k)} \pi(B)
  = (n_k - a_k + 1) \pi({a-e_k}).
\end{align}
This proves \eqref{eq:recpipoly}.

\subsection{Recursion \eqref{eq:recLpoly}}

Let $k = 1,\ldots,K$ and $a \in {\cal N} \setminus \{0\}$.
By definition of $L_k(a)$, we have
$$
L_k(a)
= \mathbb{E}\left[ \sum_{i \in I_k} {\bf X}_i \bigg| |I({\bf X})|_\Sigma = a \right]
= \sum_{i \in I_k} \mathbb{E}\left[ {\bf X}_i | |I({\bf X})|_\Sigma = a \right].
$$
It follows that
\begin{align}
  \label{eq:splitL}
  \pi(a) L_k(a)
  &= \sum_{i \in I_k} \sum_{\substack{A \subset I, i \in A \\ |A|_\Sigma = a}}
  \pi(A) L_i(A),
\end{align}
and by \eqref{eq:recLsets}, we obtain
\begin{align*}
  \pi(a) L_k(a)
  &= \sum_{i \in I_k} \sum_{\substack{A \subset I, i \in A \\ |A|_\Sigma = a}}
  \frac{\rho_i \pi(A \setminus \{i\}) + \rho_i \pi(A)
  + \sum_{j \in A \setminus \{i\}} \rho_j \pi(A \setminus \{j\}) L_i(A \setminus \{j\})}
  {\mu(A) - \sum_{j \in A} \rho_j}.
\end{align*}
Using the regularity assumptions, this can be rewritten as
\begin{align*}
  \left( h(a) - \sum_{l=1}^K a_l \varrho_l \right) \pi(a) L_k(a)
  {}={} &\varrho_k \sum_{i \in I_k} \sum_{\substack{A \subset I, i \in A \\ |A|_\Sigma = a}} \pi(A \setminus \{i\})
  + \varrho_k \sum_{i \in I_k} \sum_{\substack{A \subset I, i \in A \\ |A|_\Sigma = a}} \pi(A) \\
  &{}+{} \sum_{i \in I_k} \sum_{\substack{A \subset I, i \in A \\ |A|_\Sigma = a}}
  \sum_{\substack{j \in A \\ j \neq i}} \rho_j \pi(A \setminus \{j\}) L_i(A \setminus \{j\}).
\end{align*}
The first term is given by \eqref{eq:firsterm}.
The second term is simply
\begin{align*}
  \varrho_k \sum_{i \in I_k} \sum_{\substack{A \subset I, i \in A \\ |A|_\Sigma = a}} \pi(A)
  = \varrho_k \sum_{\substack{A \subset I \\ |A|_\Sigma = a}} \sum_{i \in A \cap I_k} \pi(A)
  = a_k \varrho_k \pi(a).
\end{align*}
Finally, for any $i \in I_k$, we have
\begin{align*}
  \sum_{\substack{A \subset I, i \in A \\ |A|_\Sigma = a}}
  \sum_{\substack{j \in A \\ j \neq i}} \rho_j \pi(A \setminus \{j\}) L_i(A \setminus \{j\})
  = \sum_{l=1}^K \varrho_l \sum_{\substack{j \in I_l \\ j \neq i}}
  \sum_{\substack{A \subset I, i,j \in A \\ |A|_\Sigma = a}}
  \pi(A \setminus \{j\}) L_i(A \setminus \{j\}).
\end{align*}
Doing the same substitution as in \eqref{eq:firsterm}, we have for any $l = 1,\ldots,K$,
\begin{align*}
  \sum_{\substack{j \in I_l \\ j \neq i}}
  \sum_{\substack{A \subset I, i,j \in A \\ |A|_\Sigma = a}}
  \pi(A \setminus \{j\}) L_i(A \setminus \{j\})
  &= \sum_{\substack{B \subset I, i \in B \\ |B|_\Sigma = a-e_l}}
  \sum_{j \in I_l \setminus (B \cap I_l)}
  \pi(B) L_i(B), \\
  &= (n_l - a_l + 1)
  \sum_{\substack{B \subset I, i \in B \\ |B|_\Sigma = a-e_l}}
  \pi(B) L_i(B).
\end{align*}
Hence the third term of the sum is equal to
\begin{align*}
  \sum_{i \in I_k} \sum_{l=1}^K (n_l - a_l + 1) \varrho_l
  \sum_{\substack{B \subset I, i \in B \\ |B|_\Sigma = a-e_l}}
  \pi(B) L_i(B)
  &= \sum_{l=1}^K (n_l - a_l + 1) \varrho_l
  \sum_{i \in I_k} \sum_{\substack{B \subset I, i \in B \\ |B|_\Sigma = a-e_l}}
  \pi(B) L_i(B), \\
  &= \sum_{l=1}^K (n_l - a_l + 1) \varrho_l \pi({a-e_l}) L_k(a-e_l),
\end{align*}
where the second equality holds by \eqref{eq:splitL}.
When we substitute the three terms by their expressions, we obtain \eqref{eq:recLpoly}.

\section{Proof of Theorem \ref{theo:concentration}}
\label{app:concentration}

We give the proof only for the case $K = 2$ for ease of notation;
the other cases follow analogously.
For now, we assume that for all $n \ge 1$,
all servers have the same capacity
$\mu_s^{(n)} = \xi^{(n)}$ for any $s \in S^{(n)}$.

Let $0 < \epsilon < 1$.
We will show that there exists a sequence $(g_n : n \ge 1)$
such that $g_n = \omega(\log n)$ and for any $n \ge 1$,
$$
\mathbb{P} \left\{ \exists A \subset I^{(n)} \text{ s.t. }
{\bf M}^{(n)}(A) \le (1 - \epsilon) \mu^{(n)}(A) \right\} \le e^{-g_n}.
$$
Let us first give the main ideas of the proof.
As in \cite{SV16}, it is divided in three steps.
We first provide a bound for
$\mathbb{P} \left\{ {\bf M}^{(n)}(A) \le (1 - \epsilon) \mu^{(n)}(A) \right\}$
for each $A \subset I^{(n)}$ for $n$ large enough.
Then, for each $a \in {\cal N}^{(n)} = \{0,1,\ldots,n\}^2$,
we use the union bound to obtain a uniform bound
over all sets $A \subset I^{(n)}$ with $|A|_{\Sigma^{(n)}} = a$.
Finally, another use of the union bound over all $a \in {\cal N}^{(n)}$ gives us the result.

\subsection{Partial bound}

Let $n \ge 1$, $a \in {\cal N}^{(n)}$ and $A \subset I^{(n)}$ so that $|A|_{\Sigma^{(n)}} = a$.
Recall that $\mu^{(n)}(A) = \mathbb{E}[{\bf M}^{(n)}(A)]$ with
$$
{\bf M}^{(n)}(A) = \xi^{(n)} \sum_{s \in S^{(n)}} 1_{s \in \bigcup_{i \in A} {\bf S}_i^{(n)}}.
$$
The variables $1_{s \in \bigcup_{i \in A} {\bf S}_i^{(n)}}$ for $s \in S^{(n)}$ are Bernoulli distributed with parameter
$$
p_a^{(n)} = 1 - \prod_{k=1}^K \left( 1 - \frac{d_k^{(n)}}{m^{(n)}} \right)^{a_k}.
$$
Dubbashi et al. proved in Theorem $10$ of \cite{DPR96} that these random variables are negatively associated in the sense of Definition $3$ of \cite{DPR96}.
Their Theorem $14$ then showed that the Chernoff-Hoeffding bounds (see for instance \cite{M95,S11}), which hold for independent random variables,
can also be applied to these random variables.
Hence we have
\begin{align}
  \label{eq:smallchernoff}
  \mathbb{P} \left\{ {\bf M}^{(n)}(A) \le (1 - \epsilon) \mu^{(n)}(A) \right\} &\le e^{- \frac{\epsilon^2}2 m^{(n)} p_a^{(n)}}, \\
  \label{eq:bigchernoff}
  \mathbb{P} \left\{ {\bf M}^{(n)}(A) \le (1 - \epsilon) \mu^{(n)}(A) \right\} &\le e^{- m^{(n)} H\left[ (1 - \epsilon) p_a^{(n)} || p_a^{(n)} \right]},
\end{align}
where for any $p,q \in (0,1)$, $H[p||q]$ is the KL divergence between two Bernoulli random variables with parameters $p$ and $q$ respectively, given by
$$
H[p||q] = p \log\left( \frac{p}{q} \right) + (1-p) \log\left( \frac{1-p}{1-q} \right).
$$

We also use the following lemmas which will be proved later in Appendix \ref{app:techlemmas}:
\begin{lemma}
  \label{lem:smallminor}
  Let $0 < \delta < \frac12$.
  Consider a sequence $(g_n : n \ge 1)$ such that $g_n = o\left( d_1^{(n)} \right)$ and $g_n = o\left( d_2^{(n)} \right)$.
  For large enough $n$, we have
  $$
  p_a^{(n)} \ge \delta \frac{(a_1 + a_2) g_n}n,
  \quad \forall a = (a_1, a_2) \in \left\{ 0,1,\ldots,\left\lfloor \frac{n}{g_n} \right\rfloor \right\}^2.
  $$
\end{lemma}
\begin{lemma}
  \label{lem:bigminor}
  There exists a positive constant $\delta$ such that
  $$
  H\left[ (1-\epsilon) p_a^{(n)} || p_a^{(n)} \right] \ge -\delta + \epsilon \frac{a_1 d_1^{(n)} + a_2 d_2^{(n)}}{m^{(n)}},
  \quad \forall n \ge 1,
  \quad \forall a \in {\cal N}^{(n)}.
  $$
\end{lemma}

Consider the sequence $(g_n : n \ge 1)$ given by
$$
g_n = \left( \min\left( d_1^{(n)}, d_2^{(n)} \right) \log n \right)^{1/2},
\quad \forall n \ge 1.
$$
Observe that $g_n = \omega(\log n)$, $g_n = o\left( d_1^{(n)} \right)$ and $g_n = o\left( d_2^{(n)} \right)$.
Now let $n \ge 1$ and $a \in {\cal N}^{(n)}$. We distinguish two cases depending on the value of $a$.

\subsubsection{Case 1: $0 \le a_1 \le \frac{n}{g_n}$ and $0 \le a_2 \le \frac{n}{g_n}$}
\label{subapp:case1}

By Lemma \ref{lem:smallminor}, there is a positive constant $\delta_1$ such that, for large enough $n$,
$$
p_a^{(n)} \ge \delta_1 \frac{(a_1 + a_2) g_n}n.
$$
Using (\ref{eq:smallchernoff}), we deduce that
$$
\mathbb{P} \left\{ {\bf M}^{(n)}(A) \le (1 - \epsilon) \mu^{(n)}(A) \right\}
\le e^{- \frac{\epsilon^2}2 \delta_1 b (a_1 + a_2) g_n}
$$
for any $A \subset I^{(n)}$ such that $|A|_{\Sigma^{(n)}} = a$.
The union bound yields
\begin{align*}
  &\mathbb{P} \left\{ \exists A \subset I^{(n)} \text{ s.t. } |A|_{\Sigma^{(n)}} = a
  \text{ and } {\bf M}^{(n)}(A) \le (1 - \epsilon) \mu^{(n)}(A) \right\} \\
  &\le e^{- \frac{\epsilon^2}2 \delta_1 b (a_1 + a_2) g_n} \binom{n}{a_1} \binom{n}{a_2}, \\
  &\le e^{- \frac{\epsilon^2}2 \delta_1 b (a_1 + a_2) g_n} n^{a_1} n^{a_2}, \\
  &\le e^{a_1 g_n \left(- \frac{\epsilon^2}2 \delta_1 b + \frac{\log n}{g_n}\right)}
  e^{a_2 g_n \left(- \frac{\epsilon^2}2 \delta_1 b + \frac{\log n}{g_n}\right)}.
\end{align*}
Since $g_n = \omega(\log n)$, we obtain for large enough $n$
$$
\mathbb{P} \left\{ \exists A \subset I^{(n)} \text{ s.t. } |A|_{\Sigma^{(n)}} = a \text{ and } {\bf M}^{(n)}(A) \le (1 - \epsilon) \mu^{(n)}(A) \right\}
\le e^{- \delta_2 (a_1 + a_2) g_n}
$$
with $\delta_2 = \frac{\epsilon^2}4 \delta_1 b > 0$.

\subsubsection{Case $2$: $a_1 > \frac{n}{g_n}$ or $a_2 > \frac{n}{g_n}$}
\label{subapp:case2}

Combining Lemma \ref{lem:bigminor} with (\ref{eq:bigchernoff}),
we deduce that there is a positive constant $\delta_3$ such that
$$
\mathbb{P} \left\{ {\bf M}^{(n)}(A) \le (1 - \epsilon) \mu^{(n)}(A) \right\}
\le e^{\delta_3 m^{(n)} - \epsilon \left( a_1 d_1^{(n)} + a_2 d_2^{(n)} \right)}
$$
for any $A \subset I^{(n)}$ such that $|A|_{\Sigma^{(n)}} = a$.
Since $m^{(n)} = \lceil b n \rceil$ and $g_n = o\left( d_1^{(n)} \right)$,
we have $\delta_3 m^{(n)} \le \frac\epsilon2 \frac{n d_1^{(n)}}{g_n}$ when $n$ is large enough.
If $a_1 > \frac{n}{g_n}$, we also have that $\frac\epsilon2 \frac{n d_1^{(n)}}{g_n} \le \frac\epsilon2 a_1 d_1^{(n)}$ so that
$$
\delta_3 m^{(n)} - \epsilon \left( a_1 d_1^{(n)} + a_2 d_2^{(n)} \right)
\le - \frac\epsilon2 a_1 d_1^{(n)} - \epsilon a_2 d_2^{(n)} \le - \frac\epsilon2 \left( a_1 d_1^{(n)} + a_2 d_2^{(n)} \right)
$$
for large enough $n$.
The same argument holds by inverting $a_1$ and $a_2$ when $a_2 > \frac{n}{g_n}$,
so we conclude that there is a positive constant $\delta_4$ such that for large enough $n$, we have
$$
\mathbb{P} \left\{ {\bf M}^{(n)}(A) \le (1 - \epsilon) \mu^{(n)}(A) \right\}
\le e^{- \delta_4 \left( a_1 d_1^{(n)} + a_2 d_2^{(n)} \right)}
$$
for any $A \subset I^{(n)}$ such that $|A|_{\Sigma^{(n)}} = a$.
The union bound yields
\begin{align*}
  &\mathbb{P} \left\{ \exists A \subset I^{(n)} \text{ s.t. }
  |A|_{\Sigma^{(n)}} = a \text{ and } {\bf M}^{(n)}(A) \le (1 - \epsilon) \mu^{(n)}(A) \right\} \\
  &\le e^{- \delta_4 \left( a_1 d_1^{(n)} + a_2 d_2^{(n)} \right)} \binom{n}{a_1} \binom{n}{a_2}, \\
  &\le e^{- \delta_4 \left( a_1 d_1^{(n)} + a_2 d_2^{(n)} \right)} n^{a_1} n^{a_2}, \\
  &\le e^{a_1 d_1^{(n)} \left(- \delta_4 + \log n / d_1^{(n)} \right)} e^{a_2 d_2^{(n)} \left(- \delta_4 + \log n / d_2^{(n)} \right)}.
\end{align*}
Since $d_1^{(n)} = \omega(\log n)$ and $d_2^{(n)} = \omega(\log n)$, for large enough $n$, we have
\begin{align*}
  \mathbb{P} \left\{ \exists A \subset I^{(n)} \text{ s.t. }
  |A|_{\Sigma^{(n)}} = a \text{ and } {\bf M}^{(n)}(A) \le (1 - \epsilon) \mu^{(n)}(A) \right\}
  &\le e^{- \delta_5 a_1 d_1^{(n)}} e^{- \delta_5 a_2 d_2^{(n)}}
  = e^{- \delta_5 \left( a_1 d_1^{(n)} + a_2 d_2^{(n)} \right)}
\end{align*}
for some positive constant $\delta_5 < \delta_4$.

\subsection{Conclusion}

Combining cases \ref{subapp:case1} and \ref{subapp:case2},
we deduce that there exists a positive constant $\delta_6$ such that
$$
\mathbb{P} \left\{ \exists A \subset I^{(n)} \text{ s.t. }
|A|_{\Sigma^{(n)}} = a \text{ and } {\bf M}^{(n)}(A) \le (1 - \epsilon) \mu^{(n)}(A) \right\}
\le e^{- \delta_6 g_n},
\quad \forall a \in {\cal N}^{(n)}
$$
when $n$ is large enough.
Using the union bound again, we obtain
\begin{align*}
  \mathbb{P} \left\{ \exists A \subset I^{(n)} \text{ s.t. }
  {\bf M}^{(n)}(A) \le (1 - \epsilon) \mu^{(n)}(A) \right\}
  &\le n^2 e^{- \delta_6 g_n}
  = e^{2 \log n} e^{- \delta_6 g_n}
  = e^{- g_n \left( \delta_6 - \frac{2 \log n}{g_n} \right)}.
\end{align*}
Since $g_n = \omega(\log n)$, we conclude that for a constant $\delta_7 < \delta_6$, we have for large enough $n$
$$
\mathbb{P} \left\{ \exists A \subset I^{(n)} \text{ s.t. }
{\bf M}^{(n)}(A) \le (1 - \epsilon) \mu^{(n)}(A) \right\}
\le e^{- \delta_7 g_n}.
$$

Finally, when servers are in groups as in Assumption \ref{ass:servers},
we can break down ${\bf M}^{(n)}$ into a sum of random rank functions, one for each groups.
The result follows by showing the concentration in each group as above,
and then using the union bound again.

\section{Proof of the lemmas for Theorem \ref{theo:concentration}}
\label{app:techlemmas}

\setcounter{lemma}{2}
\begin{lemma}
  \label{lem:smallminor}
  Let $0 < \delta < \frac12$.
  Consider a sequence $(g_n : n \ge 1)$ such that $g_n = o\left( d_1^{(n)} \right)$ and $g_n = o\left( d_2^{(n)} \right)$.
  For large enough $n$, we have
  $$
  p_a^{(n)} \ge \delta \frac{(a_1 + a_2) g_n}n,
  \quad \forall a = (a_1, a_2) \in \left\{ 0,1,\ldots,\left\lfloor \frac{n}{g_n} \right\rfloor \right\}^2.
  $$
\end{lemma}

\begin{proof}
  Consider the sequence $(f_n : n \ge 1)$ of functions defined on $\mathbb{R}_+^2$ by
  $$
  f_n(t_1,t_2) = 1 - \left( 1 - \frac{d_1^{(n)}}{bn} \right)^{t_1} \left( 1 - \frac{d_2^{(n)}}{bn} \right)^{t_2}, \quad \forall (t_1,t_2) \in \mathbb{R}_+^2.
  $$
  We have
  \begin{align*}
    f_n\left( \frac{2n}{g_n}, 0 \right) = 1 - \left[ \left( 1 - \frac{d_1^{(n)}}{bn} \right)^\frac{n}{g_n} \right]^2 \xrightarrow[n \to \infty]{} 1
    \quad \text{and} \quad
    f_n\left( 0, \frac{2n}{g_n} \right) = 1 - \left[ \left( 1 - \frac{d_2^{(n)}}{bn} \right)^\frac{n}{g_n} \right]^2 \xrightarrow[n \to \infty]{} 1.
  \end{align*}
  Thus, there is $n_\delta \ge 1$ so that
  $f_n\left( \frac{2n}{g_n}, 0 \right) \ge 2 \delta$ and $f_n\left( 0, \frac{2n}{g_n} \right) \ge 2 \delta$ for all $n \ge n_\delta$.

  Then, for any $n \ge n_\delta$ and any $t_1, t_2 \le \frac{n}{g_n}$, we have
  \begin{align*}
    f_n(t_1, t_2)
    &= f_n \left( \frac{t_1}{t_1+t_2} (t_1 + t_2, 0) + \left( 1 - \frac{t_1}{t_1+t_2} \right) (0, t_1 + t_2) \right), \\
    &\ge \frac{t_1}{t_1+t_2} f_n(t_1+t_2,0) + \frac{t_2}{t_1+t_2} f_n(0,t_1+t_2), \\
    &\ge \frac{t_1}{t_1+t_2} \frac{t_1+t_2}{\frac{2n}{g_n}} f_n\left( \frac{2n}{g_n}, 0 \right)
    + \frac{t_2}{t_1+t_2} \frac{t_1+t_2}{\frac{2n}{g_n}} f_n\left( 0, \frac{2n}{g_n} \right), \\
    &\ge 2 \delta \frac{t_1 g_n}{2n} + 2 \delta \frac{t_2 g_n}{2n}, \\
    &= \delta \frac{(t_1 + t_2) g_n}{n},
  \end{align*}
  where the first two inequalities hold by concavity of $f_n$.
\end{proof}

\begin{lemma}
  \label{lem:bigminor}
  There exists a positive constant $\delta$ such that
  $$
  H\left[ (1-\epsilon) p_a^{(n)} || p_a^{(n)} \right] \ge -\delta + \epsilon \frac{a_1 d_1^{(n)} + a_2 d_2^{(n)}}{m^{(n)}},
  \quad \forall n \ge 1,
  \quad \forall a \in {\cal N}^{(n)}.
  $$
\end{lemma}

\begin{proof}
  By definition of $H$,
  \begin{align*}
    H\left[ (1 - \epsilon) p_a^{(n)} || p_a^{(n)} \right]
    {}={} &(1 - \epsilon) p_a^{(n)} \log(1 - \epsilon) \\
          &{}+{} \left( 1 - (1 - \epsilon) p_a^{(n)} \right) \log\left( 1 - (1 - \epsilon) p_a^{(n)} \right) \\
          &{}-{} \left( 1 - (1 - \epsilon) p_a^{(n)} \right) \log\left( 1 - p_a^{(n)} \right).
  \end{align*}
  The first and the second terms are greater
  than $(1 - \epsilon) \log(1 - \epsilon)$ and $\log(\epsilon)$ respectively.
  With $\delta = (1 - \epsilon) \log\left( \frac1{1-\epsilon} \right) + \log\left( \frac1\epsilon \right) > 0$, we obtain
  \begin{align*}
    H\left[ (1 - \epsilon) p_a^{(n)} || p_a^{(n)} \right]
    \ge - \delta - \left( 1 - (1 - \epsilon) p_a^{(n)} \right) \log\left( 1 - p_a^{(n)} \right)
    \ge - \delta - \epsilon \log\left( 1 - p_a^{(n)} \right).
  \end{align*}
  Finally, observe that
  \begin{align*}
    \log\left( 1 - p_a^{(n)} \right)
    = a_1 \log\left( 1 - \frac{d_1^{(n)}}{m^{(n)}} \right) + a_2 \log\left( 1 - \frac{d_2^{(n)}}{m^{(n)}} \right)
    \le - \frac{a_1 d_1^{(n)} + a_2 d_2^{(n)}}{m^{(n)}},
  \end{align*}
  where in the inequality we used the fact that $\log\left( 1 - \frac{d_k^{(n)}}{m^{(n)}} \right) \le - \frac{d_k^{(n)}}{m^{(n)}}$
  for $k = 1,2$.
  Hence, we obtain the expected result.
\end{proof}

\end{document}